\newcommand{\mypara}[1]{\medskip \noindent {\bf #1}}
\newcommand{\den}{\textnormal{density}}
\begin{document}
\title{A Polylogarithmic Approximation for Directed Steiner
  Forest \\
  in Planar Digraphs}
\author{Chandra Chekuri \thanks{Dept. of Computer Science, Univ. of
  Illinois, Urbana-Champaign, Urbana, IL 61801. {\tt
  chekuri@illinois.edu}. Supported in part by NSF grants CCF-1910149,
CCF-1907937, and CCF-2402667.}
\and
Rhea Jain \thanks{Dept. of Computer Science, Univ. of Illinois,
  Urbana-Champaign, Urbana, IL 61801. {\tt
    rheaj3@illinois.edu}. Supported in part by NSF grants CCF-1907937  and CCF-2402667.}
}
\date{}
\maketitle
\thispagestyle{empty}

\begin{abstract}
\label{sec:abstract}
We consider Directed Steiner Forest (DSF), a fundamental problem in
network design. The input to DSF is a directed edge-weighted graph
$G = (V, E)$ and a collection of vertex pairs
$\{(s_i, t_i)\}_{i \in [k]}$. The goal is to find a minimum cost
subgraph $H$ of $G$ such that $H$ contains an $s_i$-$t_i$ path for
each $i \in [k]$. DSF is NP-Hard and is known to be hard to
approximate to a factor of $\Omega(2^{\log^{1 - \eps}(n)})$ for any fixed
$\eps > 0$ \cite{dodis_design_1999}. DSF admits approximation ratios
of $O(k^{1/2 + \eps})$ \cite{chekuri_set_2011} and
$O(n^{2/3 + \eps})$ \cite{BermanBKRY13}.

In this work we show that in planar digraphs, an important and useful
class of graphs in both theory and practice, DSF is much more
tractable.  We obtain an $O(\log^6 k)$-approximation algorithm
via the junction tree technique.
Our main technical contribution is to prove the
existence of a low density junction tree in planar digraphs. To find an
approximate junction tree we rely on recent results on rooted directed
network design problems \cite{FriggstadM23,chekuri_directed_2024}, in
particular, on an LP-based algorithm for the Directed Steiner Tree
problem \cite{chekuri_directed_2024}. Our work and several other recent ones on algorithms for
planar digraphs
\cite{FriggstadM23,kawarabayashi_embeddings_2021,chekuri_directed_2024} are
built upon structural insights on planar graph reachability and
shortest path separators \cite{thorup_compact_2004}.
\end{abstract}

\newpage 
\setcounter{page}{1}

\section{Introduction}
\label{sec:intro}

Network design is a rich field of study in algorithms and discrete
optimization. Problems in this area are motivated by various practical
applications and have also been instrumental in the development of
important tools and techniques.
Two fundamental problems in network design are Steiner
Tree and Steiner Forest. In Steiner Forest, the input is a graph
$G = (V, E)$ with non-negative edge costs $c: E \rightarrow
\mathbb{R}_+$ and a collection of vertex pairs $D = \{(s_i, t_i)\}_{i \in [k]}$; the goal
is to find a subgraph $H \subseteq G$ of minimum cost such that for
each $i \in [k]$, there exists an $s_i$-$t_i$ path in $H$.  The
Steiner Tree problem is a special case of Steiner Forest in which
there exists some root $r \in V$ such that $s_i = r$ for all
$i \in [k]$; in other words, the goal is to connect a single source
to a given set of sink vertices.  When the input graph is
\emph{undirected}, these problems are both NP-Hard and APX-hard to
approximate, and also admit constant-factor approximation algorithms; see
\Cref{sec:rel-work} for a detailed discussion.

This paper considers the setting in which the input graph is
\emph{directed}.  In Directed Steiner Forest (DSF), the input is a
directed graph $G=(V,E)$ with edge-costs $c: E \rightarrow
\mathbb{R}_+$, and the goal is to find a
min-cost subgraph that contains a \emph{directed path} (dipath) from
$s_i$ to $t_i$ for each $i \in [k]$. Directed Steiner Tree (DST) is
the special case when there is a single source $r$ that needs to be
connected to the sinks $t_i, i \in [k]$. In many settings
directed graph problems tend to be more difficult to
handle. Unlike their undirected counterparts, DST and DSF have
strong lower bounds on their approximability.
DST is known to be hard to approximate to a factor of
$\Omega(\log^{2 - \eps}(k))$ unless NP has randomized quasi-polynomial
time algorithms \cite{HalperinK03}, and to a factor
$\Omega(\log^2 k/\log \log k)$ under other complexity assumptions
\cite{GrandoniLL22} (see \Cref{sec:rel-work}).  Furthermore, a natural
cut-based LP relaxation (used in approximation algorithms for several
undirected network design problems) has a polynomial factor integrality gap;
$\Omega(\sqrt k)$ \cite{ZosinK02} or $\Omega(n^{\delta})$ for some
fixed $\delta > 0$ \cite{LiL22}.  The best known approximation ratios
for DST are $O(k^\eps)$ for any fixed $\eps > 0$ in polynomial time
\cite{Zelikovsky97} and $O(\log^2 k/ \log \log k)$ in
\emph{quasi-polynomial} time \cite{GrandoniLL22}.  
Whether DST admits a polynomial time poly-logarithmic approximation
ratio has remained a challenging open problem for over 25
years.

The Directed Steiner Forest problem, on the other hand, is known
\emph{not} to admit a polylogarithmic approximation ratio unless $\PTIME =
\NP$. This is because
DSF is hard to approximate to a factor
$\Omega(2^{\log^{1 - \eps}(n)})$ for any $\eps > 0$ via a simple
reduction from the Label-Cover problem \cite{dodis_design_1999}.
One technical reason for this difficulty is that, in spite of the name, a
minimal feasible solution to DSF may not be a forest (unlike the case
of undirected graphs). Note the contrast here to DST, in which a
minimal feasible solution is an out-tree rooted at $r$.
Due to this lack of structure as well as aforementioned hardness
results, there has been limited progress in the development of approximation
algorithms.
The current
best approximation ratios for DSF are $O(k^{1/2 + \eps})$
\cite{chekuri_set_2011} in the regime when $k$ is small, 
and $O(n^{2/3 + \eps})$ \cite{BermanBKRY13} when $k$ is large;
both results were obtained over a decade ago.

Recently, Friggstad and Mousavi \cite{FriggstadM23} made exciting
progress on DST by obtaining a simple $O(\log k)$-approximation in
\emph{planar} digraphs. Note that this establishes a separation
between the hardness of DST in general digraphs and in planar
digraphs. A follow-up work by Chekuri et al. \cite{chekuri_directed_2024}
shows that the
cut-based LP relaxation for DST has an integrality gap of
$O(\log^2 k)$ in planar digraphs, which is in sharp contrast to the
known lower bounds in general digraphs.  Motivated by these positive
results, as well as the inherent practical interest of planar graphs,
we consider Directed Steiner Forest in planar digraphs (planar-DSF),
and prove that it admits a poly-logarithmic approximation ratio.

\begin{theorem}
\label{thm:main}
  There is an $O(\log^6 k)$-approximation for Directed Steiner Forest in planar
  digraphs, where $k$ is the number of terminal pairs.
\end{theorem}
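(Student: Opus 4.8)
The plan is to follow the \emph{junction tree} paradigm. Recall that a junction tree rooted at a vertex $r$ is a pair $(T^-,T^+)$, where $T^-$ is an arborescence oriented into $r$ and $T^+$ an arborescence oriented out of $r$; it \emph{covers} a pair $i$ if $s_i\in V(T^-)$ and $t_i\in V(T^+)$ (so that $T^-\cup T^+$ contains an $s_i$--$r$--$t_i$ dipath), and its \emph{density} is $(c(T^-)+c(T^+))$ divided by the number of pairs it covers. A standard reduction then says: if (i) on every feasible residual instance of planar-DSF with $k'$ pairs there is a junction tree of density at most $\alpha\cdot\mathrm{OPT}/k'$ (where $\mathrm{OPT}$ is the optimum of the original instance, and we use that the residual optimum is at most $\mathrm{OPT}$), and (ii) one can compute in polynomial time a junction tree whose density is within a factor $\beta$ of the minimum, then repeatedly adding an approximately-minimum-density junction tree and discarding the pairs it covers produces a feasible subgraph of cost $O(\alpha\beta\log k)\cdot\mathrm{OPT}$, by the usual set-cover-style telescoping against the harmonic series. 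So it suffices to prove (i) and (ii) with, say, $\alpha=O(\log^2 k)$ and $\beta=O(\log^3 k)$, which gives $O(\alpha\beta\log k)=O(\log^6 k)$.

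The heart of the argument, and the step I expect to be the main obstacle, is (i): the existence of a polylogarithmic-density junction tree in planar digraphs. Here I would exploit the shortest-path separator structure of planar (di)graphs underlying \cite{thorup_compact_2004,FriggstadM23,chekuri_directed_2024}. Fix an optimal solution $H^*$ of cost $\mathrm{OPT}$ and, for each $i$, a dipath $P_i\subseteq H^*$ from $s_i$ to $t_i$. I would recursively decompose the instance: at each recursion node carrying a set of pairs, take a separator of the ambient planar graph consisting of $O(1)$ shortest dipaths that is balanced with respect to the pairs (each side retaining at most half the pairs), recurse on the two sides, and stop after $O(\log k)$ levels, so that every pair is eventually separated by some separator dipath $\pi$. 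For the pairs separated at a given node one wants to build a junction tree out of $\pi$ and of sub-paths of $H^*$. The delicate point — which does not arise in DST, where the root is given — is that $\pi$ is a single \emph{directed} shortest path, so an endpoint $r$ of $\pi$ out-reaches, along $\pi$, the point where a crossing $P_i$ meets it, and hence the $r$--$t_i$ connection is free, but the reverse connection $s_i\to r$ is not supplied by $\pi$; one must therefore also bring in the in-reachability structure and choose $r$ so that $\pi$ together with sub-paths of $H^*$ yields \emph{simultaneously} an $s_i$--$r$ dipath and an $r$--$t_i$ dipath for all pairs handled at that node. The two-sided nature of Thorup's directed planar separator is exactly what makes such a common junction $r$ available. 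A charging argument then bounds the total cost of all separator dipaths (one $O(\log k)$ from the recursion depth) and the total $H^*$ cost used for the in/out connections against $O(\log k)\cdot\mathrm{OPT}$, while every pair is accounted for; averaging over the recursion nodes yields a single junction tree of density $O(\log^2 k)\cdot\mathrm{OPT}/k$.

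For (ii) I would reduce the computation of a minimum-density junction tree to a small number of rooted network-design problems: guess the root $r$ (polynomially many choices) and the target number $q$ of covered pairs (at most $k$ choices), and ask for a minimum-cost in-arborescence into $r$ together with an out-arborescence out of $r$ that between them cover $q$ pairs — a coupled pair of quota (prize-collecting) Directed Steiner Tree instances, the in-version becoming an out-version after reversing all arcs. Invoking the LP-based results for DST in planar digraphs \cite{chekuri_directed_2024}, whose natural cut LP has integrality gap $O(\log^2 k)$ there, together with \cite{FriggstadM23}, and paying an extra $O(\log k)$ factor to handle the quota and the coupling of the two arborescences via Lagrangian relaxation and standard guessing, gives $\beta=O(\log^3 k)$.

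Combining (i), (ii), and the reduction yields an $O(\alpha\beta\log k)=O(\log^6 k)$-approximation, establishing \Cref{thm:main}. The bulk of the technical work, and the part I expect to be hardest, is (i): organizing the recursive shortest-path-separator decomposition of the planar digraph so that a \emph{single} root simultaneously serves the in- and out-connections of all pairs separated at a recursion node, and making the two-sided charging (separator cost plus relevant $H^*$ cost) go through with only a polylogarithmic loss.
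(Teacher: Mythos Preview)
Your overall architecture---existence of a junction tree with density $O(\log^2 k)\cdot\mathrm{OPT}/k$, an $O(\log^3 k)$-approximation for minimum-density junction tree, and the greedy-covering $O(\log k)$---matches the paper exactly. But step~(i) as you have sketched it has two concrete gaps.

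First, you assume the separator consists of $O(1)$ \emph{dipaths}. Thorup's planar separator gives three root-to-leaf paths in a spanning tree of the \emph{undirected} graph; in a general digraph each such tree path may alternate direction arbitrarily often and need not decompose into $O(1)$ dipaths. The paper handles this by a preliminary reduction (also from \cite{thorup_compact_2004}) of the optimum $H^*$ into ``2-layered'' pieces: each piece admits a spanning tree in which every root-to-leaf path is the concatenation of at most two dipaths, so the separator becomes at most six genuine dipaths. You never invoke this layering step, and without it the separator-as-dipaths premise fails.

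Second, and more seriously, once you are in the one-dipath setting (all pairs have a witness path meeting a fixed dipath $\pi$), your assertion that ``the two-sided nature of Thorup's directed planar separator is exactly what makes such a common junction $r$ available'' is not a proof and is not how the argument goes. In general no single $r\in\pi$ serves all pairs. The paper instead assigns to each pair $i$ the interval $I_i=\pi[a_i,b_i]$, where $a_i$ is the earliest point of $\pi$ reachable from $s_i$ in $H^*$ and $b_i$ the latest point reaching $t_i$; any $r\in I_i$ works for pair $i$. It then buckets pairs by $|I_i|$ into $O(\log k)$ length classes, places candidate roots at a regular spacing matching each class, and uses a disjointness observation (if two $P_{s_i}$ paths share a vertex then $a_i=a_{i'}$) to argue every edge of $H^*$ lies in only $O(\log k)$ of the resulting junction trees. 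Averaging over this collection yields one of density $O(\log k)\cdot c(H^*)/k$, and the earlier separator recursion contributes the second $\log k$. This interval-and-bucketing lemma is the technical core of the existence proof and is entirely absent from your sketch.

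For step~(ii), your Lagrangian/quota route is different from the paper's: the paper writes a single density LP (with fractional ``coverage'' variables $y_{s_i}=y_{t_i}$ normalized to sum to $1$), buckets pairs by $y$-value, scales the fractional $x$ up on the best bucket to get feasible DST-LP solutions in $G$ and in the reversed graph, and then invokes the $O(\log^2 k)$ integrality-gap rounding of \cite{chekuri_directed_2024}. Both approaches rely on the same LP-relative DST guarantee and both give $O(\log^3 k)$, so this part is fine in spirit, just executed differently.
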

\begin{remark}[Node Weights]
  Our algorithm and analysis generalize relatively easily to the
  setting in which both edges and nodes have non-negative
  weights. This is a consequence of the technique. The standard
  reduction of node-weighted problems to edge-weighted problems in
  directed graphs does not necessarily preserve planarity.
  Node-weighted Steiner problems have also been considered separately
  in the undirected setting; see \Cref{sec:rel-work}.
\end{remark}

\subsection{Technical Overview and Outline}
\label{sec:tech-overview}

We prove \Cref{thm:main} by employing the so-called \emph{junction tree} technique.
This technique allows one to reduce a multicommodity problem (such as
DSF) to its rooted/single-source counterpart (such as DST).
The power of this technique comes from
the fact that in several network design problems, the single-source
problem is often easier to solve. Junction-based schemes were initially
highlighted in the context of non-uniform buy-at-bulk network design
\cite{hajiaghayi_approximating_2009,chekuri_approximation_2010},
although the basic idea was already implicitly used for DSF in
\cite{charikar_approximation_1999}. The technique has since been used to make progress in a
variety of network design problems
including improvements to DSF (see \Cref{sec:rel-work}).

The high-level idea is as follows:
we say $H \subseteq G$ is a \emph{partial solution} if it is a feasible solution
for some subset of terminal pairs. We look for a low-density partial solution, where
\emph{density}  is defined as ratio of the cost of
the partial solution to the number of terminal pairs it contains. Using a standard
iterative approach for covering problems, one can reduce the original problem to the
min-density partial solution problem while losing an additional $O(\log k)$ factor in
the approximation ratio. In general, finding a min-density partial solution may
still be hard; therefore, we restrict our attention to well-structured solutions.
We aim to find partial solutions that contain some ``junction'' vertex $r \in V$
through which many pairs connect. Formally, in the context of
DSF, we say a \emph{junction tree} on terminal pairs
$D_H \subseteq D$ is a subgraph $H \subseteq G$ with a root $r$ such that for
every terminal pair $(s_i, t_i) \in D_H$, $H$ contains an $s_i$-$r$ path \emph{and} an
$r$-$t_i$ path.\footnote{This definition of junction tree does not necessarily
correspond to a tree in a digraph;
the terminology originated from the undirected setting.}
The \emph{density} of $H$ is $c(H)/|D_H|$. The proof of \Cref{thm:main} proceeds
in two steps; first, we show that there exists a junction tree of low density,
and second, we provide an algorithm to efficiently find a low-density junction
tree given that one exists.

The key technical contribution of this paper is the first step:
showing that any instance of planar-DSF contains a low-density junction tree.
We prove the following Theorem in \Cref{sec:existence}:

\begin{restatable}{theorem}{existence}
\label{thm:existence-main}
  Given an instance $(G, D)$ of planar-DSF,
  there exists a junction tree of density $O(\log^2 k) \opt/k$ in $G$
  where $k = |D|$ and $\opt$ is the cost of an optimum solution for $(G, D)$.
\end{restatable}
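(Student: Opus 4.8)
The plan is to exhibit a low-density junction tree \emph{inside an optimal solution itself}, by recursively cutting that solution with planar shortest-path separators and then averaging. Fix an optimal solution $H^* \subseteq G$ with $c(H^*) = \opt$; by standard scaling we may assume edge costs are positive integers that are polynomially bounded, and (arguing component by component) that $H^*$ is weakly connected. For each pair $i$ we take $P_i \subseteq H^*$ to be a \emph{shortest} $s_i$--$t_i$ dipath under a fixed consistent tie-breaking rule, so that any two of the shortest dipaths we ever consider -- in particular $P_i$ and any shortest dipath later used as a separator -- intersect in a contiguous sub-dipath. A preliminary bucketing of the pairs by the value $c(P_i)$ rounded to the nearest power of two reduces us, at the cost of one $O(\log k)$ factor in the final density, to an instance in which all the $c(P_i)$ agree up to a factor of two.

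Next I would build a hierarchical decomposition of $H^*$ via Thorup-style shortest-path separators. Each piece $R$ is a connected subgraph of $H^*$; we pick a center $v_R$, form the shortest-path out-tree and in-tree of $H^*\cap R$ rooted at $v_R$, and let the separator $S_R$ be a union of $O(1)$ root-paths of these two trees -- hence a union of $O(1)$ shortest dipaths of $H^*\cap R$, each directed either out of or into $v_R$ -- whose removal leaves pieces with at most $\tfrac{2}{3}$ of the terminals of $R$ each. The recursion has depth $O(\log k)$, and each edge of $H^*$ lies in $O(\log k)$ pieces, so $\sum_R c(H^*\cap R) = O(\log k)\cdot\opt$. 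For each pair $i$ let $R_i$ be the minimal piece whose vertex set contains $V(P_i)$. Since $P_i$ is connected and lies in no child of $R_i$, it meets $S_{R_i}$, hence one of its constituent dipaths $Q_i$, with $P_i\cap Q_i$ a sub-dipath; we charge $i$ to the token $(R_i,Q_i)$.

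The core of the argument is the construction of, for each token $(R,Q)$, a single junction tree rooted at $v_R$ serving all (or a constant fraction of the) pairs charged to it. For such a pair $i$, let $v_i,w_i$ be the first and last vertices of $P_i$ on $S_R$: walking $s_i\rightsquigarrow v_i$ along $P_i$ and then, \emph{if $v_i$ lies on an in-directed separator dipath}, onward to $v_R$ gives $s_i\rightsquigarrow v_R$; symmetrically, \emph{if $w_i$ lies on an out-directed separator dipath}, then $v_R\rightsquigarrow w_i\rightsquigarrow t_i$. The junction tree $J_{R,Q}$ formed by the $O(1)$ dipaths of $S_R$ together with the prefixes $P_i[s_i..v_i]$ and suffixes $P_i[w_i..t_i]$ then serves every such pair, and since all of $J_{R,Q}$ sits inside $H^*\cap R$ we get $c(J_{R,Q}) = O(c(H^*\cap R))$ \emph{independent of how many pairs it serves}. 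The difficulty -- and the one place where the directedness of DSF genuinely bites, in contrast to undirected Steiner Forest, where a minimal solution is already a forest -- is the pairs whose first crossing of $S_R$ is on an out-directed dipath, or whose last crossing is on an in-directed dipath: these are not served at $(R,Q)$, because a single shortest dipath lets one funnel flow \emph{into} a vertex or \emph{out of} a vertex but not \emph{through} it. I would re-charge such a pair downward in the hierarchy -- after contracting $S_R$, the offending prefix (resp.\ suffix) of $P_i$ lies in a single child of $R$, and an accounting over the $O(\log k)$ levels should show each pair is eventually served. I expect making this re-charging airtight -- in particular guaranteeing that the in-half and the out-half of a re-charged pair are reattached at a \emph{common} root -- to be the main obstacle, and to be exactly where the finer structure of planar reachability, not just planar separators, is needed.

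Finally the density bound follows by averaging. Each pair is served at exactly one token, so $\sum_{(R,Q)}\bigl|\{\text{pairs served at }(R,Q)\}\bigr| = |D|$, while $\sum_{(R,Q)} c(J_{R,Q}) = O(1)\sum_R c(H^*\cap R) = O(\log k)\,\opt$; hence some token $(R,Q)$ has $c(J_{R,Q})$ divided by the number of pairs it serves at most $O(\log k)\,\opt/|D|$. Undoing the preliminary power-of-two bucketing costs one more $O(\log k)$ factor, yielding a junction tree of density $O(\log^2 k)\,\opt/k$ in $G$.
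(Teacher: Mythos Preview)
Your skeleton has the right flavor (work inside an optimal solution, recurse with planar shortest-path separators to depth $O(\log k)$, then average), but there are two genuine gaps, and the paper's proof fills both with ingredients your outline is missing.

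First, your separator step does not go through as stated. In a directed graph the shortest-path out-tree and in-tree from a center $v_R$ need not span $H^*\cap R$: a vertex that neither reaches nor is reachable from $v_R$ lies in neither tree, so there is no ``root-path'' through it and the Lipton--Tarjan argument cannot produce a balanced separator made of $O(1)$ dipaths. The paper handles this by a \emph{preliminary} reduction (before any recursion) using Thorup's layering: one decomposes $H^*$ into graphs $G_j$ that are \emph{2-layered}, meaning they carry a rooted \emph{undirected} spanning tree every root-to-leaf path of which is the concatenation of at most two dipaths of $G_j$. Only after this reduction does the separator recursion run, using this 2-layered tree; each separator is then at most $6$ dipaths by construction, and this step costs no extra $\log k$ factor (it is an averaging over the layers, not a bucketing).

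Second, the directionality difficulty you flag is real, and your proposed re-charging does not obviously terminate or keep the two halves of a pair attached at a common root; this is exactly the place where the paper takes a different turn. Rather than insisting on the separator's center $v_R$ as the junction root, the paper reduces (via the recursion) to the ``one-path'' setting: all chosen $s_i$--$t_i$ paths meet a \emph{single} dipath $P$. For each $i$ it sets $a_i$ = first vertex of $P$ that $s_i$ can reach and $b_i$ = last vertex of $P$ that can reach $t_i$; the hypothesis forces $a_i\le_P b_i$, so each pair gets an interval $I_i=P[a_i,b_i]$, and \emph{any} vertex of $I_i$ works as a root. A geometric bucketing by the \emph{length} $|I_i|$ (not by $c(P_i)$) groups pairs whose intervals share a common vertex, yielding junction trees in which each edge of $E^*$ participates $O(\log k)$ times. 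Your preliminary power-of-two bucketing on $c(P_i)$ does not provide this structure; the second $\log k$ in the paper comes from the interval-length bucketing, and the first comes from the separator depth.
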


We prove the preceding theorem by considering an optimum solution
$E^* \subseteq E$; we find several junction trees in $E^*$ that
are mostly disjoint and, in total, cover a large fraction of
terminal pairs.
\begin{remark}\label{rem:dsf_integrality_gap}
  We note that this proof strategy shows that
  there exists a low-density junction tree with respect to the optimal
  \emph{integral} solution. It is an interesting open problem to prove
  a poly-logarithmic factor upper bound on the integrality
  gap of the natural cut-based LP relaxation for planar-DSF.
\end{remark}
We employ
two tools developed by Thorup \cite{thorup_compact_2004} on directed
planar graphs in his work on reachability and approximate shortest
path oracles.  The first is a ``layering'' of a directed graph such
that every path is contained in at most two consecutive layers, and
each layer contains some nice tree-like structure. This allows us to
restrict our attention to two layers at a time. We remark that a
similar layering approach was recently used by
\cite{kawarabayashi_embeddings_2021} to obtain improved upper bounds
on the multicommodity flow-cut gap in directed planar graphs; this was
partly the inspiration for this work. The
second tool is a ``separator'' theorem
(essentially proved in \cite{LiptonTarjan79}, but given explicitly 
in \cite{thorup_compact_2004}), which states that every planar
graph contains three short paths whose removal results in connected
components each containing at most half the number of vertices.  We
use this shortest-path separator to devise a recursive approach
similar to that of \cite{thorup_compact_2004}.  We then restrict
attention to one level of recursion in which many $s_i$-$t_i$ paths
pass through the separator, and show that in this case, we can use the
nodes on the separator as roots for low-density junction trees.

For the second step, we prove the following theorem in \Cref{sec:algo}:

\begin{restatable}{theorem}{algo}
\label{thm:algo-main}
  Given an instance $(G, D)$ of planar-DSF, there exists an efficient algorithm
  to obtain a junction tree of $G$ of density at most $O(\log^3 k)$ times
  the optimal junction tree density in $G$.
\end{restatable}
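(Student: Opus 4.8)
The plan is to cast the min-density junction tree problem (for a fixed root) as a density version of Directed Steiner Tree and round it using the LP-based rounding for planar DST of \cite{chekuri_directed_2024}. The algorithm iterates over all choices of root $r \in V$ and returns the sparsest junction tree found, so fix $r$. A junction tree rooted at $r$ on a pair set $S$ is exactly the union of a subgraph connecting $r$ to each $t_i$ ($i\in S$) --- a DST instance in $G$ with root $r$ and sinks $\{t_i\}$ --- and a subgraph connecting each $s_i$ to $r$ --- a DST instance in the reverse graph $G^{\mathrm{rev}}$ (still planar) with root $r$ and sinks $\{s_i\}$. I set up the natural cut-based LP for the min-density version: a variable $x_e \ge 0$ for each edge and a variable $f_i \in [0,1]$ for each pair; for every $i$ and every vertex set $U$ separating $s_i$ from $r$ (with $s_i \in U$) or separating $r$ from $t_i$ (with $r \in U$), impose $x(\delta^+(U)) \ge f_i$; require $\sum_i f_i \ge 1$; minimize $\sum_e c_e x_e$. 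This LP is solvable in polynomial time (min-cut separation, or a compact flow formulation), and scaling any integral junction tree on $S$ rooted at $r$ by $1/|S|$ gives a feasible solution of objective equal to its density, so the LP optimum $C^*_r$ is at most the best junction-tree density with root $r$; hence $\min_r C^*_r$ is at most the optimal junction-tree density.

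Given an optimal LP solution for $r$, I discard the pairs with $f_i < 1/(2k)$ (they contribute less than $1/2$ to $\sum_i f_i$, so the rest still sum to at least $1/2$) and partition the remaining pairs into $O(\log k)$ buckets $B_j = \{i : f_i \in (2^{-j-1},2^{-j}]\}$. By averaging, some bucket satisfies $\sum_{i\in B_j} f_i = \Omega(1/\log k)$, hence $|B_j| = \Omega(2^j/\log k)$. Scaling the LP solution by $2^{j+1}$ and capping coverage at $1$ yields, simultaneously, a feasible fractional solution to the cut-based DST LP on $G$ for root $r$ and sinks $\{t_i : i\in B_j\}$ and one on $G^{\mathrm{rev}}$ for root $r$ and sinks $\{s_i : i\in B_j\}$, each of cost at most $2^{j+1} C^*_r$. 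Now apply the planar-DST LP-rounding of \cite{chekuri_directed_2024} to each of the two directions; it converts a feasible cut-LP solution into an integral Steiner tree of cost $O(\log^2 k)$ times the fractional cost. Taking the union $H$ of the resulting in- and out-trees gives an integral junction tree rooted at $r$ covering all of $B_j$, with $c(H) = O(\log^2 k)\cdot 2^{j+1} C^*_r$. Its density is $c(H)/|B_j| = O(\log^2 k)\cdot 2^{j+1} C^*_r / \Omega(2^j/\log k) = O(\log^3 k)\,C^*_r$. Iterating over all roots and all $O(\log k)$ buckets and outputting the sparsest junction tree found gives density $O(\log^3 k)\,\min_r C^*_r = O(\log^3 k)$ times the optimal junction-tree density, in polynomial time.

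The main obstacle --- and the reason this works only for planar digraphs --- is the rounding step. In general digraphs the cut-based DST LP has a polynomial integrality gap \cite{ZosinK02,LiL22}, so the scaled fractional solution could not be rounded with a polylogarithmic loss; it is exactly the planar DST integrality-gap/rounding theorem of \cite{chekuri_directed_2024} that lets the reduction go through. The remaining points are routine: checking that scaling-and-capping a feasible density-LP solution is feasible for the DST LP in each direction, that the union of the two rounded trees is a valid junction tree (and costs at most a constant times the sum of the two costs), that the bucketing loses only an $O(\log k)$ factor, and handling the trivial edge cases $s_i = r$ or $t_i = r$.
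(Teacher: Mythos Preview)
Your proposal is correct and follows essentially the same approach as the paper: set up the density LP for a guessed root $r$, bucket the pairs by their fractional coverage values into $O(\log k)$ geometric buckets, pick a bucket with $\Omega(1/\log k)$ total coverage, scale to obtain feasible fractional DST LP solutions in $G$ and $G^{\mathrm{rev}}$, and round each via the $O(\log^2 k)$ planar-DST integrality-gap result of \cite{chekuri_directed_2024}. The only cosmetic differences are that the paper uses separate variables $y_{s_i}=y_{t_i}$ with the normalization $\sum_i y_{t_i}=1$, and handles the small-coverage pairs implicitly by showing they contribute at most $1/2$ rather than discarding them up front.
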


We show that any approximation algorithm for planar-DST with respect
to the optimal \emph{fractional} solution to an LP relaxation can be
used to derive an approximation algorithm for finding the min-density
junction tree. This uses a standard bucketing and scaling argument,
initially given in the context of junction trees for buy-at-bulk
network design \cite{chekuri_approximation_2010}.  This approach
crucially relies on the fact that there exists a good approximation
algorithm for planar-DST with respect to the natural LP relaxation. Finding
junction trees without using the LP is not as straight forward.  It is
also not enough to be able to solve the min-density Directed Steiner
Tree problem; there exist algorithms to do so in planar graphs via
purely combinatorial techniques \cite{chekuri_directed_2024},
however this approach does
not extend to finding a good density junction tree due to the
additional requirement that for each $(s_i, t_i) \in D_H$, we
need to connect \emph{both} $s_i$ and $t_i$ to $r$.

\subsection{Related Work}
\label{sec:rel-work}

\mypara{DST in general digraphs:} Directed Steiner Tree was first
studied in approximation by Zelikovsky \cite{Zelikovsky97}, who
obtained an $O(k^\eps)$-approximation for any fixed $\eps >
0$. Charikar et al.  \cite{charikar_approximation_1999} built on ideas
from \cite{Zelikovsky97} to devise an $O(\log^3 k)$-approximation
in quasi-polynomial time. This was later improved to
$O(\log^2 k/ \log \log k)$ in quasi-polynomial time, by Grandoni et
al.  \cite{GrandoniLL22} who used an LP-based approach, and by Ghuge
and Nagarajan \cite{GhugeN22} who used a recursive greedy approach
building on ideas in \cite{ChekuriP05}. In terms of hardness, it is
not difficult to see that DST generalizes Set Cover and is therefore
hard to approximate to a factor $(1 - \eps) \log k$ \cite{Feige98}; in
fact, it is hard to approximate to a factor
$\Omega(\log^{2 - \eps}(k))$ unless NP has randomized quasipolynomial
time algorithms \cite{HalperinK03}. Grandoni et
al. \cite{GrandoniLL22} recently showed that even with
quasi-polynomial time algorithms, DST is not approximable within a
factor $\Omega(\log^2 k/\log \log k)$ unless the Projection Games
Conjecture fails or $\NP \subseteq \ZPTIME(2^{n^{\delta}})$ for some
$\delta \in (0,1)$.  DST and algorithmic ideas for it are closely
related to those for Group Steiner tree (GST) and Polymatroid Steiner
tree (PST). We refer the reader to some relevant papers
\cite{GargKR00,ZosinK02,chekuri_set_2011,Calinescu_Zelikovsky_2005,chekuri_directed_2024}
for more details.

\mypara{DSF and Junction Schemes:} The first nontrivial approximation
for Directed Steiner Forest was an $\tilde O(k^{2/3})$-approximation
given by Charikar et al. \cite{charikar_approximation_1999}.
This follows a similar iterative
density-based procedure as the junction tree approach; however, they
restrict to trees of a much simpler structure.
This approximation ratio was subsequently improved to $O(k^{\frac 1 2 +\eps})$
\cite{chekuri_set_2011}; \cite{chekuri_set_2011} showed that given an
instance $(G, D)$ of DSF, there exists a junction tree of density at
most $O(k^{1/2})$ times the optimum. They then provide an algorithm
to find a low-density junction tree via height reduction and
Group Steiner Tree rounding. 
DSF has improved approximation ratios when $k$ is large. 
\cite{feldman_improved_2012} obtained an 
$O(n^{\eps} \cdot \min(n^{4/5},m^{2/3}))$-approximation using a junction-based 
approach. This analysis was refined by \cite{BermanBKRY13} using ideas developed for 
finding good directed spanners, giving an improved approximation ratio of 
$O(n^{2/3 + \eps})$. DSF with \emph{uniform} edge costs admits an
$O(n^{3/5 + \eps})$-approximation \cite{CDKL20}.

DSF and DST have also been considered from a parameterized complexity
perspective. DST is fixed parameter tractable parameterized by the
number of terminals \cite{dreyfus_steiner_1971}. On the other hand,
DSF is $W[1]$-hard \cite{guo_parameterized_2011}; however, it is
polynomial time solvable if the number of terminals $k$ is constant
\cite{feldman_directed_2006,feldmann_complexity_2023}.

\mypara{Undirected Graphs:} Steiner Tree admits a simple
2-approximation by taking a minimum spanning tree on the terminal set.
There has been a long line of work improving this approximation factor
using greedy techniques
\cite{zelikovsky_116-approximation_1993,berman_improved_1994,karpinski_new_1997,hougardy_1598_1999},
culminating in a $\left(1 + \frac {\ln 3}2\right)$-approximation given by Robins
and Zelikovsky \cite{robins_tighter_2005}. This remained the best
known approximation ratio for several years, until Byrka et
al. developed an LP-based $(\ln 4 + \eps)$-approximation
\cite{ByrkaGRS13,GoemansORZ12}.  The Steiner Tree problem is APX-hard
to approximate; in fact, there is no approximation factor better than
$\frac {96}{95}$ unless $\PTIME = \NP$ \cite{ChlebikC08}.  The Steiner
Forest problem in undirected graphs admits a $2$-approximation via
primal-dual techniques \cite{agrawal_when_1995,goemans_general_1995}
and iterated rounding \cite{jain_factor_2001}. The node weighted
versions of Steiner Tree and Steiner Forest admit an 
$O(\log k)$-approximation where $k$ is the number of 
terminals \cite{klein_nearly_1995}, and
further this ratio is asymptotically tight via a reduction from Set
Cover.

\mypara{Planar and Minor-Free Graphs:}
Improved approximation ratios have been obtained for several problems
in special classes of graphs, such as planar and minor-free graphs.
We first discuss undirected graphs.
In planar graphs, Steiner Tree admits a PTAS \cite{BorradaileKM09};
this was later extended
to a PTAS for Steiner Forest in graphs of bounded genus
\cite{bateni_approximation_2011}.
Recently, \cite{cohen-addad_bypassing_2022} obtained a QPTAS for Steiner Tree in
minor-free graphs. Furthermore, although the node-weighted variant
of Steiner Tree captures Set Cover in general graphs,
there exists a constant factor approximation in
planar graphs, and more generally, in any proper minor-closed
graph family \cite{demaine_node-weighted_2009}. In directed graphs,
along with the recent results discussed above, Friggstad and Mousavi
obtained a constant-factor approximation for DST in minor-free graphs
in the setting where the input graph is \emph{quasi-bipartite}
\cite{FriggstadM21}.

\subsection{Definitions and Notation}

For a directed graph $G$, we let $V(G)$ and $E(G)$ denote the 
vertex and edge sets of $G$ respectively. For $E' \subseteq E$, we let $V(E')$ denote
the set of vertices in the graph induced by $E'$. For a subset
$S \subseteq V$, we let $\delta^+(S)$ denote the set of all edges
$(u,v)$ with $u \in S$, $v \notin S$, and we let $\delta^-(S)$ denote
the set of all edges $(u,v)$ with $u \notin S$, $v \in S$.
We will sometimes consider the \emph{undirected} version of $G$;
this is the underlying undirected graph obtained by ignoring orientations
of edges in $E$.

For any directed path (dipath) $P \subseteq G$, and for any
$u, w \in P$, we write $u <_P w$ if $u$ appears before $w$ in $P$.  We
define $>_P$, $\leq_P$, and $\geq_{P}$ similarly. For  $u, w \in P$
with $u \leq_P w$, we let $P[u, w]$ denote the subpath of $P$ from $u$
to $w$.  We denote the length of a path $P$, which is the
number of edges in $P$, by $|P|$.  For any $u,v,w \in G$, if $P'$ is a $u$-$v$
path and $P''$ is a $v$-$w$ path, we let $P' \circ P''$ denote the
concatenation of $P'$ and $P''$.  We will sometimes abuse notation and
conflate path with dipath when it is clear from context. Unless explicitly 
stated, we 
do not distinguish between paths and walks, since we are only concerned with
reachability. 

Given an instance $(G, D)$ of planar-DSF, we let $\opt$ denote the
value of an optimal solution.

\begin{definition}[Junction Tree]
\label{def:junction}
A \emph{junction tree} on terminal pairs
$D_H \subseteq D$ is a subgraph $H \subseteq G$ with a root $r$ such that for
every terminal pair $(s_i, t_i) \in D_H$, $H$ contains an $s_i$-$r$ path and an
$r$-$t_i$ path. The \emph{density} of a junction tree is the ratio of its
cost $c(H)$ to the number of terminal pairs $|D_H|$. We say a terminal
pair $(s_i, t_i) \in D$ is \emph{covered} by $H$ if $(s_i, t_i) \in D_H$; that is, there
exists an $s_i$-$t_i$ walk in $H$ containing $r$.
\end{definition}

For ease of notation, when considering subsets of terminal pairs $D' \subseteq D$, 
we sometimes write $i \in D'$ to mean $(s_i, t_i) \in D'$. 

\section{Existence of a good junction tree}
\label{sec:existence}

This section proves \Cref{thm:existence-main}, restated below:

\existence*

\begin{definition}
  A \emph{2-layered spanning tree} of a digraph $G$ is a rooted tree that is a 
  spanning tree of the undirected version of $G$ such that any path from the root 
  to a leaf is the concatenation of at most $2$ dipaths of $G$.
  A \emph{2-layered digraph} is a digraph that has a 2-layered spanning tree. 
  The \emph{root} of a 2-layered digraph is the root of its 2-layered spanning tree. 
\end{definition} 

\begin{remark}
  Note that a two-layered digraph may have additional edges aside from the spanning tree;
  we do not pose any restrictions on the directions of these edges. See Figure 
  \ref{fig:two-layered-tree} for an example. 
\end{remark}

\begin{figure}
  \centering
  \includegraphics[width=0.5\linewidth]{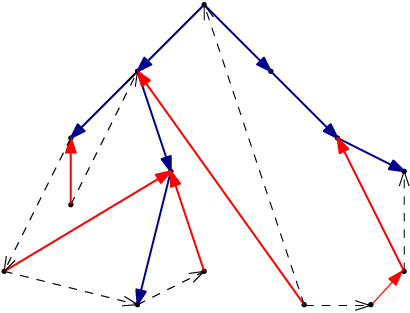}
  \caption{Example of a two-layered digraph. Bolded edges form the 
  two-layered spanning tree; remaining edges in the graph are dashed. 
  The two dipaths for each root to leaf path are denoted by blue and red edges:
  the first dipath away from the root given in blue and the second towards 
  the root in red.}
  \label{fig:two-layered-tree}
\end{figure}

The proof of \Cref{thm:existence-main} consists of three stages. 
First, in \Cref{sec:2-layered}, we use a decomposition  
given by Thorup \cite{thorup_compact_2004} of a directed graph into several 
2-layered digraphs while preserving planarity. 
Using this decomposition, we show that it suffices to consider cases where the 
optimal solution is a 2-layered digraph; thus reducing proving 
\Cref{thm:existence-main} to proving \Cref{lem:two-layered}:

\begin{restatable}{lemma}{twolayered}
\label{lem:two-layered}
  Let $(G, D)$ be an instance of planar-DSF. Suppose there 
  exists a feasible solution $E^* \subseteq E(G)$ such that $G^* := (V(E^*), E^*)$ is a 
  2-layered digraph. Let $r$ denote the root of $G^*$.
  Suppose that for each $(s_i, t_i) \in D$, 
  there exists an $s_i$-$t_i$ path in $G^* \setminus \{r\}$. 
  Then there exists a junction tree $H \subseteq G^* \setminus \{r\}$
  of density at most $O(\log^2 k)c(E^*)/k$. 
\end{restatable}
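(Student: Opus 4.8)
The plan is to exploit the 2-layered structure of $G^*$ together with Thorup's shortest-path separator theorem applied to the 2-layered spanning tree $T$. Since $T$ is a spanning tree of the undirected version of $G^*$, the root-to-vertex paths in $T$ give a canonical way to write $V(G^*)$ in ``tree coordinates,'' and each such path is a concatenation of at most two dipaths of $G^*$. First I would apply the planar shortest-path separator theorem to $T$: there exist three root-to-leaf tree paths $Q_1, Q_2, Q_3$ whose removal breaks the graph into pieces each containing at most half the terminal pairs. Because each $Q_j$ is a root-to-leaf path of a 2-layered spanning tree, $Q_j$ decomposes into at most two dipaths $Q_j = A_j \circ B_j$ where $A_j$ goes away from the root and $B_j$ comes back toward the root (more precisely, toward the leaf). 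The key point is that if a terminal pair $(s_i,t_i)$ has its $s_i$-$t_i$ path $P_i$ in $G^*\setminus\{r\}$ crossing the separator, then $P_i$ must intersect some $Q_j$, and we want to route $i$ through a vertex of $Q_j$.

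The main step is a recursive charging argument. I would set up a recursion on subinstances: at each node of the recursion we have a subgraph of $G^*$ and a subset of terminal pairs whose connecting paths live inside that subgraph. At each level, apply the separator to the current piece; the pairs split into those ``cut'' by the separator at this level and those living entirely inside a child piece. The recursion has depth $O(\log k)$ since each level halves the number of pairs in a piece. For the pairs cut at a given recursion node, I want to build junction trees rooted at separator vertices. Here is where the 2-layered structure does the work: for a pair $(s_i,t_i)$ whose path $P_i$ hits the separator path $Q_j = A_j \circ B_j$ at some vertex $v$, I can reach $v$ from the root-end along $A_j$ (a single dipath) and reach the leaf-end from $v$ along $B_j$ (a single dipath); combined with subpaths of $P_i$ and the tree paths in $T$ from $s_i$ and $t_i$, this lets one connect $s_i$ and $t_i$ to a fixed root vertex. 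Concretely, because $T$ is 2-layered, the tree path from the root of $G^*$ down to any vertex is two dipaths, so using $r'$ = endpoint of $Q_j$ (or an internal vertex where $A_j$ meets $B_j$) as a junction root, every cut pair can be connected to $r'$ using: (i) a subpath of $P_i$ from $s_i$ to the crossing vertex $v \in Q_j$, then along $Q_j$ to $r'$, and (ii) from $r'$ along $Q_j$ to some vertex, then along a subpath of $P_i$ to $t_i$ — the at-most-two-dipath property is what guarantees these concatenations are valid dipaths in $G^*$ after possibly reversing which end of $Q_j$ plays the role of the junction.

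The cost accounting is the delicate part, and I expect it to be the main obstacle. We must bound $\sum_j c(Q_j)$ in terms of $c(E^*)$: since $Q_j$ is a tree path in $T \subseteq$ undirected version of $G^*$, we have $\sum_j c(Q_j) \le 3\, c(E^*)$ at a single recursion node, but summing over all $O(\log k)$ levels naively gives $O(\log k)\, c(E^*)$ at the top level and a union over recursion nodes at a fixed level reuses edges of $E^*$, so a careful argument (charging each separator path to the piece it splits, and using that pieces at the same level are nearly edge-disjoint) should give that the total cost of all separator paths used, summed over the whole recursion tree, is $O(\log k)\, c(E^*)$. Then, since the separator paths at each recursion node collectively cut a constant fraction of that node's pairs, a standard averaging argument over the $O(\log k)$ levels shows that at some recursion node the junction tree formed from its (at most $O(1)$) separator paths has density $O(\log^2 k)\, c(E^*)/k$: one $\log k$ from the recursion depth in the cost bound, and one $\log k$ from the fact that we must find a level and a node at that level where a $1/\log k$ fraction of pairs is both present and cut. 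Finally I would check that the resulting $H$ avoids $r$: the separator is applied to $T$ and we may assume $r$ is not an internal vertex of any $Q_j$ used as a root (or handle it by a minor modification), and all routing paths are subpaths of the $P_i \subseteq G^*\setminus\{r\}$ and of separator paths, so $H \subseteq G^* \setminus \{r\}$ as required.
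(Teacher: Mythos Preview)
Your high-level plan matches the paper's: apply Thorup's tree-path separator to the 2-layered spanning tree, use that each separator arm decomposes into at most two dipaths (so at most six dipaths per recursion node), recurse with depth $O(\log k)$, and pick a good level by averaging. Where you diverge---and where the argument breaks---is the routing step.

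You assert that for each separator dipath $Q$, one can take a single junction root $r'$ (an endpoint of $Q$, or the meeting point of the two dipaths) and route every cut pair through $r'$ via ``subpath of $P_i$ to the crossing vertex $v$, then along $Q$ to $r'$; and from $r'$ along $Q$ to some vertex, then along $P_i$ to $t_i$.'' This does not work. If $P_i$ meets $Q$ only at $v$, then to go $s_i \to r'$ you need $v \le_Q r'$, and to go $r' \to t_i$ you need $r' \le_Q v$; hence $r' = v$. Different pairs hit $Q$ at different vertices, so no single $r'$ serves them all. The 2-layered property buys you that each separator arm is at most two \emph{dipaths}; it does not buy you that all cut pairs can funnel through one vertex of such a dipath. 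This is exactly the difficulty that the paper isolates as \Cref{lem:one-path}: given a dipath $P$ that every $s_i$--$t_i$ path intersects, one defines for each pair the interval $[a_i,b_i]$ on $P$ (first vertex reachable from $s_i$, last vertex that reaches $t_i$), buckets pairs by interval length in powers of two, and shows that each edge of $E^*$ lands in $O(\log k)$ of the resulting junction trees. That interval-bucketing argument is where the second $\log k$ factor actually comes from; your sketch does not contain it.

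Relatedly, your cost accounting is not how the paper proceeds. The paper does not sum separator costs over recursion levels. Instead it fixes one good level $j^*$ with $|D_{j^*}| \ge k/O(\log k)$, uses that the components at that level are \emph{disjoint} subgraphs of $G^*$ to find a single component $C$ with $c(E(C))/|D_{j^*}^C| \le c(E^*)/|D_{j^*}|$, picks one of the $\le 6$ dipaths of $S_{j^*}^C$ hitting a $1/6$ fraction of those pairs, and only then invokes \Cref{lem:one-path} on $(C\setminus\{r\}, D^*)$ to get the junction tree. The two $\log k$ factors are: one from the pigeonhole over recursion levels, and one from \Cref{lem:one-path}. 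Your proposed source for the second $\log k$ (summing separator costs over depth) is not needed and would not by itself close the gap above, since the junction tree must also pay for the $P_{s_i}, P_{t_i}$ connections, and bounding how often those edges are reused across junction trees is precisely the content of \Cref{lem:one-path}.
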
 

In \Cref{sec:separator}, we use a recursive procedure built on 
a separator lemma on planar digraphs \cite{thorup_compact_2004} to reduce proving 
\Cref{lem:two-layered} to \Cref{lem:one-path}:
\begin{restatable}{lemma}{onepath}
\label{lem:one-path}
  Let $(G, D)$ be an instance of planar-DSF. Suppose there 
  exists a feasible solution $E^* \subseteq E(G)$ that contains a dipath 
  $P \subseteq E^*$ with the following property: every terminal pair 
  $(s_i, t_i) \in D$ has a dipath $P_i \subseteq E^*$ from $s_i$ to $t_i$
  such that $V(P) \cap V(P_i) 
  \neq \emptyset$. Then $E^*$ contains a junction tree of density 
  at most $O(\log k)c(E^*)/k$.
\end{restatable}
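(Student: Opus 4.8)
The plan is to use the dipath $P$ as a backbone for a divide‑and‑conquer over its vertices. Fix for each pair $i$ one dipath $P_i \subseteq E^*$ as in the hypothesis, and among the vertices of $V(P)\cap V(P_i)$ let $a_i$ and $b_i$ be, respectively, the first and the last when $P$ is traversed in order; call $P[a_i,b_i]$ the \emph{footprint} of pair $i$. The key observation is that for any vertex $r$ of $P$ lying on $P[a_i,b_i]$, the walk $P_i[s_i,a_i]\circ P[a_i,r]$ is an $s_i$–$r$ dipath and $P[r,b_i]\circ P_i[b_i,t_i]$ is an $r$–$t_i$ dipath; hence a single vertex $r\in V(P)$ is simultaneously a valid junction‑tree root (in the sense of Definition~\ref{def:junction}) for \emph{every} pair whose footprint contains $r$. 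For such a set $S$ of pairs, writing $\mathrm{tail}_i := P_i[s_i,a_i]\cup P_i[b_i,t_i]$, the subgraph $P[\min_{i\in S}a_i,\ \max_{i\in S}b_i]\ \cup\ \bigcup_{i\in S}\mathrm{tail}_i$ is a junction tree on $S$ that lies inside $E^*$, so it costs at most $c(E^*)$ and has density at most $c(E^*)/|S|$.

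First I would dispose of two easy cases. If some $P_i$ has $c(P_i)\le O(\log k)\,c(E^*)/k$, then $P_i$ alone (rooted at any vertex of $V(P)\cap V(P_i)$) is already a junction tree of the desired density, so we may assume every $P_i$ is expensive; in particular $\sum_i c(P_i)$ is large, which forces the $P_i$ to share edges heavily. Dually, if some vertex of $P$ lies on at least $k/O(\log k)$ footprints, the junction tree above finishes the proof; so we may assume no vertex of $P$ is in too many footprints, and hence (a pair through an edge passes through its endpoints) no edge of $E^*$ lies on too many of the $P_i$.

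The main step is the recursion. Split $P$ at a vertex $m$ chosen as a median of the footprint endpoints, so that at most half of the current pairs have footprint strictly left of $m$ and at most half strictly right, and recurse on the two half‑paths with the corresponding pair sets; a pair is \emph{charged} to the first recursion node whose split vertex lies in its footprint. This recursion has depth $O(\log k)$, every pair is charged to exactly one node, and at each node we record the junction tree rooted at its split vertex on the pairs charged there. Since the subpaths of $P$ used by nodes at a common level are pairwise disjoint, the total $P$-cost of all the recorded trees is $O(\log k)\,c(E^*)$; since each pair is charged once, the total cost of the tail pieces $\mathrm{tail}_i$ over all nodes is at most $\sum_i c(\mathrm{tail}_i)$. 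By the mediant inequality, some recorded junction tree has density at most $\bigl(O(\log k)\,c(E^*)+\sum_i c(\mathrm{tail}_i)\bigr)/k$.

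The step I expect to be the main obstacle is controlling $\sum_i c(\mathrm{tail}_i)$, i.e.\ showing it is $O(\log k)\,c(E^*)$. When the $P_i$ overlap little this is immediate from $\sum_i c(P_i)=O(\log k)\,c(E^*)$, and when they overlap a great deal one wants to invoke the second easy case — a heavily shared edge yields a heavily shared vertex, hence a dense junction tree. The delicate point is the intermediate ``moderately overlapping'' regime: here I expect one must interleave the overlap‑based extraction with the recursion (peeling off high‑multiplicity roots before each split) or use planarity of $E^*$ to limit how much the tails of pairs with far‑apart footprints can coincide. Once the bound $\sum_i c(\mathrm{tail}_i)=O(\log k)\,c(E^*)$ is in hand, the density estimate above produces a junction tree of density $O(\log k)\,c(E^*)/k$, completing the proof of Lemma~\ref{lem:one-path}.
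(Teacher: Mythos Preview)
Your recursion along $P$ is sound for the part of each junction tree that lies on $P$: the subpaths at a fixed level are disjoint, so the total $P$-cost over all recorded trees is indeed $O(\log k)\,c(E^*)$. The genuine gap is exactly where you flagged it: bounding the tail contribution. The quantity $\sum_i c(\mathrm{tail}_i)$ that you aim to bound by $O(\log k)\,c(E^*)$ is simply not controlled in general. With your definition of $a_i,b_i$ as the first and last vertices of $V(P)\cap V(P_i)$ for a \emph{fixed} $P_i$, a single edge off $P$ can lie on $\Theta(k)$ tails whose $a_i$'s are all distinct, so $\sum_i c(\mathrm{tail}_i)$ can be $\Theta(k)\,c(E^*)$ while no individual vertex is hit by more than a constant number of footprints; your two ``easy cases'' do not rescue this, and planarity plays no role here.

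The paper closes this gap by redefining $a_i$ as the \emph{earliest} vertex of $P$ reachable from $s_i$ in $E^*$ (and $b_i$ as the latest vertex of $P$ that reaches $t_i$), and taking $P_{s_i}$ to be any $s_i$--$a_i$ path. This extremal choice buys a key structural fact (Claim~\ref{claim:one-path-disjoint}): if $P_{s_i}$ and $P_{s_{i'}}$ share a vertex then $a_i=a_{i'}$. Consequently, for any vertex $u\notin P$ there is a single $a\in P$ such that every pair whose source-tail uses $u$ has $a_i=a$; hence $u$ appears in at most as many junction trees as $a$ does. One then bounds, for each edge of $E^*$, the number of junction trees containing it, rather than summing $c(\mathrm{tail}_i)$ over pairs. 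With this redefinition your recursive decomposition would also go through (all pairs with a common $a_i$ follow a single root-to-leaf path in your recursion tree, so any off-$P$ edge lies in $O(\log k)$ recorded trees), but as written your proposal lacks precisely this ingredient.
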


We conclude by proving \Cref{lem:one-path} in \Cref{sec:one-path}.

\subsection{Reduction to 2-Layered Digraphs}
\label{sec:2-layered}

In this section, we show that \Cref{lem:two-layered} suffices to prove
\Cref{thm:existence-main}, thus reducing to the case where the optimal
solution is a 2-layered digraph.  Let $(G, D)$ be an instance of
planar-DSF, and let $G^* = (V^*, E^*)$ be an optimal feasible solution
of cost $\opt$.  We assume without loss of generality that $E^*$
induces a weakly connected graph; if not, we apply this decomposition on each weakly
connected component separately.  We use a decomposition of digraphs
given by Thorup \cite{thorup_compact_2004}.  We include the details
and proofs here for the sake of completeness, and to highlight some
additional properties that we need.  Let $v_0 \in V^*$ be an arbitrary
node in $G^*$.  We let $L_0$ be the set of
all nodes in $V^*$ that are reachable from $v_0$ in $G^*$.  Then, we
define alternating ``layers'' as follows:
\begin{align*}
  L_j = \begin{cases}
    \{v \in V^* \setminus \cup_{j' < j} L_{j'}: v \text{ can reach } L_{j-1} \text{ in } G^*\} &j \text{ is odd} \\
    \{v \in V^* \setminus \cup_{j' < j} L_{j'}: v \text{ is reachable from } L_{j-1} \text{ in } G^*\} &j \text{ is even}
  \end{cases}.
\end{align*}

\begin{figure}
  \centering
  \includegraphics[width=0.7\linewidth]{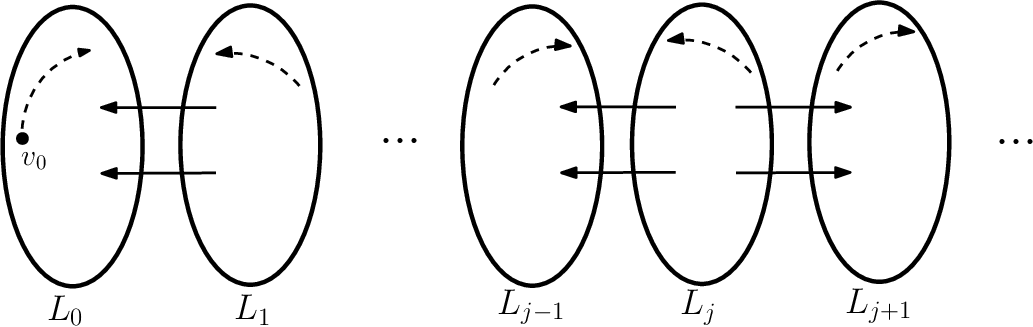}
  \caption{Layers constructed from $G^*$. Dotted lines represent edges inside
  each layer, while solid lines represent edges between layers. In this example,
  $j$ is odd.}
  \label{fig:layers}
\end{figure}

We continue this process until all vertices in $V^*$ are covered by a layer;
see \Cref{fig:layers}.
Let $\ell$ denote the index of the last layer. For $j \in \{0, \dots, \ell-1\}$,
we define $G_j$ to be the graph obtained from $G^*$ by
deleting all nodes in $\cup_{i > j+1} L_i$
and contracting all nodes in $\cup_{i < j} L_j$. We call this
contracted node the \emph{root} $r_j$ of $G_j$. It is clear from construction
that each $G_j$ is a 2-layered digraph.
Furthermore, each $G_j$ is a minor of
$G^*$ and is thus planar.\footnote{A graph $H$ is a \emph{minor} of $G$ if it
can be obtained from $G$ by deleting and/or contracting edges of $G$. It is
easy to see that if $G$ is planar, then any minor of $G$ is planar as well.}

\begin{claim}
\label{claim:layering_cost}
  The total cost $\sum_{j = 0}^{\ell-1} c(E(G_j)) \leq 2c(E^*)$.
\end{claim}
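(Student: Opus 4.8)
The plan is to charge each edge $e = (u,v) \in E^*$ to the set of graphs $G_j$ in which it survives (i.e., is not deleted and not contracted to a point), and show that this set has size at most $2$. An edge $e$ survives in $G_j$ precisely when both endpoints lie in $\cup_{i \le j+1} L_i$ (so neither is deleted) and at least one endpoint lies in $L_j \cup L_{j+1}$ (so it is not contracted into the root $r_j$); equivalently, writing $a(w)$ for the index of the layer containing $w$, the edge $e$ appears in $G_j$ iff $\max(a(u),a(v)) \le j+1$ and $\min(a(u),a(v)) \ge j$, i.e. iff $j \in \{\,\min(a(u),a(v))-1,\ \min(a(u),a(v)),\ \max(a(u),a(v))-1\,\}$ intersected with the valid range. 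The key combinatorial fact I would establish first is that $|a(u) - a(v)| \le 1$ for every edge $e = (u,v) \in E^*$: this follows directly from the definition of the layers, since if $v$ is in layer $L_j$ with $j$ even then $v$ is reachable from $L_{j-1}$, and the edge $(u,v)$ either keeps $u$ in an earlier layer or forces $u$ into $L_{j-1}$ or $L_j$ (and symmetrically for odd $j$, using reachability of $L_{j-1}$ from $u$). A short case analysis on the parity of the layer indices pins this down.

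Granting $|a(u)-a(v)| \le 1$, let $m = \min(a(u),a(v))$. Then $e$ can only possibly survive in $G_{m-1}$ and $G_m$: in $G_j$ for $j \le m-2$ one of the endpoints has been deleted (its layer index exceeds $j+1$), and in $G_j$ for $j \ge m+1$ both endpoints have been contracted into $r_j$ (both layer indices are $< j$). So each edge of $E^*$ contributes its cost to at most two of the graphs $G_0, \dots, G_{\ell-1}$. Summing, $\sum_{j=0}^{\ell-1} c(E(G_j)) \le 2 c(E^*)$, since edges created by contraction carry no cost (the root $r_j$ is a single contracted vertex, and self-loops/parallel edges among contracted vertices disappear or can be ignored for reachability). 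This is exactly the claimed bound.

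I expect the only real subtlety to be the verification that $|a(u) - a(v)| \le 1$ together with careful bookkeeping of the two boundary cases — namely whether $e$ genuinely appears in $G_{m-1}$ versus being contracted there, and what happens at the extreme layers $L_0$ and $L_\ell$ — but none of this requires more than a direct appeal to the layer definitions. A secondary point to state cleanly is that the contraction operation defining $r_j$ does not introduce any new \emph{cost}, only the already-present edges of $E^*$ (possibly with endpoints relabeled to $r_j$), so the cost accounting above is legitimate; this is immediate but worth a sentence. The main obstacle, to the extent there is one, is simply writing the parity case analysis for the adjacency bound in a way that is short rather than tedious.
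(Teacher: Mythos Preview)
Your approach matches the paper's: show every edge of $E^*$ lies in at most two of the $G_j$ by first arguing its endpoints are in the same or adjacent layers, then counting. There is, however, a bookkeeping slip. When the endpoints lie in adjacent layers $L_m$ and $L_{m+1}$ (so $M := \max(a(u),a(v)) = m+1$), your claim that ``for $j \ge m+1$ both endpoints have been contracted into $r_j$'' fails at $j = m+1$: the endpoint in $L_{m+1}$ is kept, so the edge survives in $G_{m+1}$ as an edge incident to $r_{m+1}$. Conversely, it does \emph{not} survive in $G_{m-1}$, since the $L_{m+1}$ endpoint is deleted there. Thus the correct pair of indices is $G_{M-1}, G_M$, not $G_{m-1}, G_m$. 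The source of the error is that your verbal criterion ``at least one endpoint lies in $L_j \cup L_{j+1}$'' translates to $\max(a(u),a(v)) \ge j$, not $\min(a(u),a(v)) \ge j$. With this correction the argument goes through and coincides with the paper's proof; the final bound of $2$ is unaffected.
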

\begin{proof}
  We show that each edge of $E^*$ appears in at most two of the graphs
  from $G_0,\ldots, G_{\ell-1}$.  Since $E(G_j) \subseteq E^*$ for all
  $j$, the claim follows.  Let $(u,v) \in E^*$. If $u, v$ are in the
  same layer $L_j$, then $(u,v)$ is only in $G_j$ and $G_{j-1}$; all
  other graphs $G_{j'}$ either contract (when $j' > j$) or delete
  (when $j' < j-1$) $L_j$.  If $u,v$ are in distinct layers, they must
  be in adjacent layers $L_j$ and $L_{j+1}$. For $j' > j+1$, both
  $L_j$ and $L_{j+1}$ are contracted into the root, thus
  $(u,v) \notin G_{j'}$.  For $j' < j$, $L_{j+1}$ is deleted, thus
  once again $(u,v) \notin G_{j'}$.  Therefore the edge $(u,v)$ can
  only appear $G_j$ and/or $G_{j+1}$.
\end{proof}

\begin{claim}
\label{claim:layering_pairs_covered}
  For each pair $(s_i,t_i) \in D$, there exists some $j \in \{0, \dots, \ell-1\}$
  such that $L_j \cup L_{j+1}$ contains an $s_i$-$t_i$ path.
\end{claim}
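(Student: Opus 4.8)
The plan is to prove the stronger statement that \emph{every} dipath of $G^*$ uses only vertices from two consecutive layers; \Cref{claim:layering_pairs_covered} then follows at once by taking, for each $(s_i,t_i)\in D$, any $s_i$-$t_i$ dipath $P_i$ in $G^*$, which exists because $E^*$ is feasible. To set this up, I would first record two ``closure'' consequences of the definition of the layers (using the convention $L_{-1}:=\{v_0\}$): (i) if $v\in L_j$ with $j$ even, then $v$ is reachable from $L_{j-1}$, hence so is every vertex reachable from $v$, and since $L_j$ absorbs \emph{all} as-yet-unassigned vertices reachable from $L_{j-1}$, every vertex reachable from $v$ lies in $L_0\cup\cdots\cup L_j$; moreover every vertex that can reach $v$ can reach $L_j$, so (as $L_{j+1}$ is odd and absorbs all unassigned vertices that can reach $L_j$) every such vertex lies in $L_0\cup\cdots\cup L_{j+1}$. (ii) Symmetrically, if $v\in L_j$ with $j$ odd, then every vertex that can reach $v$ lies in $L_0\cup\cdots\cup L_j$ and every vertex reachable from $v$ lies in $L_0\cup\cdots\cup L_{j+1}$.

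Given a dipath $P=x_0\to x_1\to\cdots\to x_m$ of $G^*$, let $j$ be the minimum layer index attained by a vertex of $P$, say $x_p\in L_j$, and split $P$ at $x_p$ into a prefix $x_0\to\cdots\to x_p$ and a suffix $x_p\to\cdots\to x_m$. If $j$ is even, property (i) applied to $x_p$ shows every vertex of the suffix lies in $L_0\cup\cdots\cup L_j$ (these are reachable from $x_p$) and every vertex of the prefix lies in $L_0\cup\cdots\cup L_{j+1}$ (these can reach $x_p$); if $j$ is odd, the same holds with prefix and suffix exchanged, via property (ii). Either way every vertex of $P$ lies in $L_0\cup\cdots\cup L_{j+1}$, and since $j$ is the \emph{minimum} layer index on $P$, in fact every vertex of $P$ lies in $L_j\cup L_{j+1}$. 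Applying this to $P_i$, and noting that $L_\ell\subseteq L_{\ell-1}\cup L_\ell$ (so that if the minimum index happens to equal $\ell$ we may instead use $\ell-1$), yields the required index $j\in\{0,\dots,\ell-1\}$.

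I expect the only real care needed to be bookkeeping the parities correctly --- which layers ``look forward'' (even) and which ``look backward'' (odd) --- together with the degenerate first layer $L_0$, where $v_0$ plays the role of a nonexistent $L_{-1}$. Once properties (i) and (ii) are stated cleanly from the layer definitions, the split-at-the-minimum-layer argument is short and purely formal.
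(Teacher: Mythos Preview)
Your proposal is correct and follows essentially the same approach as the paper: pick a vertex of $P_i$ in the minimum-index layer $L_j$, split the path there, and use the parity-based closure properties of the layers to confine the prefix and suffix to $L_0\cup\cdots\cup L_{j+1}$, whence minimality forces $P_i\subseteq L_j\cup L_{j+1}$. You are slightly more careful than the paper about the boundary cases ($j=0$ via your convention $L_{-1}=\{v_0\}$, and the possibility that the minimum index equals $\ell$), but the core argument is the same.
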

\begin{proof}
  Let $P_i$ be an $s_i$-$t_i$ path in $E^*$; such a path must exist by
  feasibility of $E^*$. Let $j$ be the minimum index such that $L_j$
  intersects $P_i$, and let $v$ be a node in
  $L_j \cap P_i$.

  If $j$ is even, any node reachable
  from $L_j$ must be contained in $\cup_{j' \leq j} L_{j'}$; thus 
  $P_i[v, t_i] \subseteq \cup_{j' \leq j} L_{j'}$.  
  By definition, $L_{j+1}$ contains all nodes in $G^* \setminus \cup_{j' \leq j} L_{j'}$
  that can reach $L_j$;
  thus $P_i[s_i,v]$ must be contained in $\cup_{j' \leq j+1} L_{j'}$.
  
  Otherwise, if $j$ is odd, any node that can reach
  $L_j$ must be contained in $\cup_{j' \leq j} L_{j'}$, so $P_i[s_i, v] 
  \subseteq \cup_{j' \leq j} L_{j'}$. $L_{j+1}$ contains all nodes
  in $G^* \setminus \cup_{j' \leq j} L_{j'}$ 
  reachable from $L_j$, so $P_i[v, t_i] \subseteq \cup_{j' \leq j+1} L_{j'}$.

  In either case, $P_i \subseteq \cup_{j' \leq j+1} L_{j'}$. Since $j$ is the minimum 
  index that intersects $P_i$, $P_i \subseteq L_j \cup L_{j+1}$ as desired.
\end{proof}

\begin{proof}[Reduction from \Cref{thm:existence-main} to \Cref{lem:two-layered}]
  We partition the demand pairs $D$ into $D_0, \dots, D_{j-1}$,
  where $(s_i, t_i) \in D_j$ if $L_j \cup L_{j+1}$ contains an 
  $s_i$-$t_i$ path;
  if there are multiple such $j$ we choose one arbitrarily.
  Note that all terminal pairs are covered by this partition by
  \Cref{claim:layering_pairs_covered}.

  Since $\sum_{j = 0}^{\ell-1} c(E(G_j)) \leq 2c(E^*)$ and $D_0, \dots, D_{j-1}$
  form a complete partition of $D$, there must be some $j \in \{0, \dots, \ell-1\}$
  such that $c(E(G_j))/|D_j| \leq 2c(E^*)/|D|$. We claim that $(G_j, D_j)$ 
  satisfies the conditions of \Cref{lem:two-layered}: 
  $G_j$ is a planar 2-layered digraph that is a feasible solution 
  on all terminal pairs $D_j$, and for each $i \in D_j$, there is an $s_i$-$t_i$ 
  path contained in $L_j \cup L_{j+1}$, thus avoiding the root of $G_j$. 
  By \Cref{lem:two-layered}, there exists a junction tree $H$ in $G_j$ of 
  density
  \[O(\log^2 |D_j|) \frac{c(E(G_j))}{|D_j|}
  \leq O(\log^2 k) \frac{2c(E^*)}{|D|} = O(\log^2 k) \opt/k.\]

  Furthermore, since $H$ does not contain the root of $G_j$, 
  $H$ is a subgraph of $G^*$.   
\end{proof}

\subsection{Reduction from Two-Layered Digraphs to One-Path Setting}
\label{sec:separator}

In this section, we show that assuming \Cref{lem:one-path}, we can prove
\Cref{lem:two-layered}, restated below:

\twolayered*

Fix an instance $(G, D)$ of planar-DSF and a feasible solution $E^*$ satisfying 
the conditions outlined in the statement of \Cref{lem:two-layered} above. Let 
$T^* \subseteq E^*$ be a 2-layered spanning tree with root $r$. 
Given any \emph{undirected} tree $T$, we let $P_T(u,v)$ denote the unique tree path
from $u$ to $v$. We will follow a
recursive process to partition $D$ into subsets on which we build junction trees.
To do so, we use the following separator lemma on planar digraphs.

\begin{lemma}[\cite{thorup_compact_2004}]
\label{lem:thorup-planar-sep}
  Given an \emph{undirected} planar graph $G = (V, E)$ with a spanning tree $T$
  rooted at $r$ and non-negative vertex weights $w: V \to \R_{\geq 0}$,
  we can find three vertices $u_1, u_2, u_3$ such that each component of
  $G \setminus (P_T(r, u_1) \cup P_T(r, u_2) \cup P_T(r, u_3))$ has at most
  half the weight of $G$.
\end{lemma}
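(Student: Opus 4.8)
The plan is to prove \Cref{lem:thorup-planar-sep} by the classical \emph{fundamental-cycle separator}, combined with a centroid argument on the dual spanning tree. As a first step I would reduce to the case where $G$ is a connected plane triangulation: fix a planar embedding of $G$ and add edges until every face is a triangle (allowing a plane multigraph if some face boundary is short). Adding edges can only merge connected components, and it does not change $T$ or any tree path $P_T(r,u)$, so any set of three vertices that works for the triangulation also works for $G$. I may also assume $|V|\ge 3$ and $W:=w(V)>0$, the remaining cases being trivial.

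Next I would set up the dual picture. Using the standard fact that the duals of the non-tree edges of $T$ form a spanning tree $T^{*}$ of the planar dual $G^{*}$, root $T^{*}$ at the outer face. For a non-tree edge $e=xy$, the fundamental cycle $C_e := P_T(x,y)\cup\{e\}$ is a simple closed curve, and deleting $e^{*}$ from $T^{*}$ splits the faces of $G$ exactly along $C_e$: the faces on the side of $C_e$ not containing the outer face are precisely those in the subtree of $T^{*}$ below $e^{*}$. To couple this with the weights, I would push each vertex weight $w(v)$ onto an arbitrarily chosen incident face, obtaining face weights $w^{*}$ with $\sum_f w^{*}(f)=W$; the useful point is that every vertex strictly inside $C_e$ has all of its incident faces on the inner side of $C_e$, so the $w^{*}$-weight of the subtree below $e^{*}$ is at least the $w$-weight of the vertices strictly inside $C_e$.

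Now take a centroid $f_c$ of $(T^{*},w^{*})$, i.e.\ a face such that every component of $T^{*}-f_c$ has $w^{*}$-weight at most $W/2$, and let $u_1,u_2,u_3$ be the three vertices of the triangle $f_c$ (collapsing duplicates when an edge of $f_c$ lies in $T$, which only lowers the dual-degree of $f_c$, or when $r$ coincides with a $u_i$). The tree paths $P_T(r,u_i)$ together form a subtree $Y$ of $T$, and $Y$ together with the triangle $\{u_1u_2,u_2u_3,u_3u_1\}$ cuts the sphere into a constant number of regions, one of which is the vertex-free interior of the face $f_c$. I would then argue two things. First, each connected component of $G-(V(P_T(r,u_1))\cup V(P_T(r,u_2))\cup V(P_T(r,u_3)))$ lies strictly inside a single region, since any two regions are separated either by vertices of $Y$ (all deleted) or by a single triangle edge (which no edge of $G$ can cross). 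Second, each nonempty region is bounded by one of the fundamental cycles $C_{u_1u_2},C_{u_2u_3},C_{u_3u_1}$, so the vertices strictly inside it have their assigned faces in the corresponding subtree of $T^{*}-f_c$; hence that region carries $w$-weight at most the subtree's $w^{*}$-weight, which is at most $W/2$. Combining the two points bounds every component by $W/2$, and the argument is constructive and runs in linear time.

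The step I expect to be the main obstacle is making the dictionary between the subtrees of $T^{*}-f_c$ and the regions cut out by $Y$ and $f_c$ precise, and handling the degenerate cases (a side of $f_c$ being a tree edge, or $r$ coinciding with a $u_i$), together with verifying that the $f_c$-region really is vertex-free — which is exactly why the reduction to a triangulation is needed. Once this dual-tree/region correspondence is pinned down, the rest is routine Euler-formula bookkeeping plus the standard existence of a weighted tree centroid.
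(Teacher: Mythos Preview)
The paper does not prove this lemma: it is quoted as a known result from Thorup, with a parenthetical remark that it is ``essentially proved in \cite{LiptonTarjan79}, but given explicitly in \cite{thorup_compact_2004}''. There is therefore no in-paper proof to compare against; your proposal is a reconstruction of the classical argument, and it is precisely the fundamental-cycle separator proof that Thorup (following Lipton--Tarjan) gives: triangulate, take the interdigitating dual spanning tree on the non-tree edges, push vertex weights to incident faces, take a weighted centroid face $f_c$ of the dual tree, and output its three corners. Your outline is correct, including the identification of the delicate step (matching subtrees of $T^{*}-f_c$ with the planar regions cut out by $Y\cup\partial f_c$, and handling the degenerate cases where an edge of $f_c$ is a tree edge so two of the $u_i$ collapse along a tree path).

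One small point worth tightening: your weight-pushing step (``push $w(v)$ onto an arbitrarily chosen incident face'') only gives a one-sided inequality, namely that the $w$-weight of vertices strictly inside a fundamental cycle is at most the $w^{*}$-weight of the corresponding dual subtree. That direction is exactly what you need, so the argument goes through, but you should state clearly that a vertex \emph{on} $Y$ is removed (so its weight is irrelevant) and that a vertex strictly outside $Y$ lies strictly inside exactly one region, whence all its incident faces lie in the corresponding dual subtree---this is where the triangulation is used, since otherwise a single face could straddle several regions. With that made explicit, the proof is complete and matches the cited source.
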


We define vertex weights $w(v) = 1$ if $v \in D$ and $w(v) = 0$ otherwise.
We consider the undirected version of $E^*$; that is, we ignore
all directions on $E^*$ and apply \Cref{lem:thorup-planar-sep}
on the undirected version of spanning tree $T^*$ with vertex weights $w$. 
From this, we obtain
$u_1, u_2, u_3$.
Since $T^*$ is a 2-layered spanning tree, each path
$P_{T^*}(r, u_i)$ consists of at most 2 dipaths of $E^*$. We remove the 
root $r$ and let $Q_i^1, Q_i^2$ denote the at most two dipaths of 
$P_{T^*}(r, u_i) \setminus \{r\}$. 
Let $S_0 = \cup_{i \in [3]} \{Q_i^1, Q_i^2\}$
denote this set of at most 6 dipaths; we call this a \emph{separator}.
We define $D_0 \subseteq D$ to be the set
of all terminal pairs $(s_i, t_i)$ such that $E^*$ contains an $s_i$-$t_i$ path
going through one of the dipaths in $S_0$. Equivalently, $(s_i, t_i) \in D_0$ iff
there exists an $s_i$-$t_i$ path $P_i \subseteq E^*$ such that
$V(P_i) \cap V(S_0) \neq \emptyset$.
See \Cref{fig:separator} for an example of a separator with the
corresponding set $D_0$.
We let $\calC_0$ be the set of weakly connected
components of $G \setminus (\cup_{i \in [3]} P_{T^*}(r, u_i))$; we 
drop ``weakly connected'' and simply refer to these
as ``components'' in the remainder of this section. Note that
each $C \in \calC_0$ has at most half the total number of terminals.

\begin{figure}
  \centering
  \includegraphics[width=0.7\linewidth]{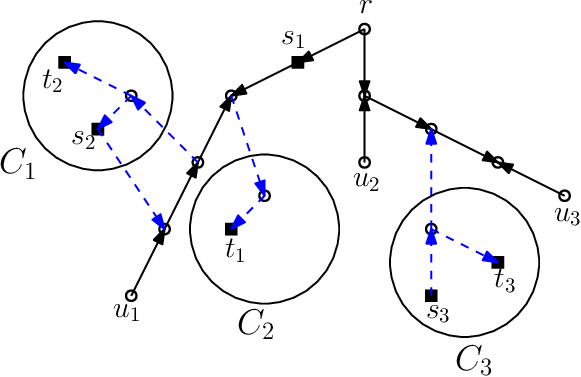}
  \caption{Example of separator and resulting weakly connected components.
  Solid black lines
  denote edges in the separator $S_0$, while dashed blue lines represent
  edges between components and the separator. Terminals are labeled and denoted 
  with boxes. In this example, $D_0 = \{(s_1, t_1), (s_2, t_2)\}$
  since there exists an $s_1$-$t_1$ and an $s_2$-$t_2$ path through
  the separator. Notice that $(s_2, t_2) \in D_0$ even though $s_2$ and $t_2$ remain
  in the same component $C_1$.}
  \label{fig:separator}
\end{figure}

We recurse on each component $C \in \calC_0$ as follows:
we contract $S_0$ into $r$ and recurse on the
sub-instance consisting of $C$ and the new contracted root $r$. 
It is not difficult to see that this new sub-instance is a 2-layered 
digraph and thus contains a 2-layered spanning tree $T^*_C$. 
We repeat the same process as above, applying \Cref{lem:thorup-planar-sep}
with $T^*_C$, and weights the same as before for all nodes in $C$ and 
$w(r) = 0$. We obtain three nodes $u_1', u_2', u_3' \in C$. Once again, 
we ignore
$r$ when considering the dipaths. We define $S_1^C$ to be the set of
at most 6 dipaths in
$\cup_{i \in [3]} (P_T(r, u_i') \setminus \{r\})$, and let $D_1^C$
be the set of all $(s_i, t_i) \in D$ with $s_i, t_i \in C$ such that
there exists an $s_i$-$t_i$ path in $C$ with a non-empty intersection
with a dipath in $S_1^C$.

\begin{remark}
  In the recursive step, we choose to contract the separator into
  the root to maintain the property that each recursive call still corresponds to
  a 2-layered digraph. It is important to remove the
  contracted root $r$ from the dipaths of $S_i$, $i > 0$ to ensure
  that all nodes in the separator are nodes in $G$ and all
  separators are disjoint.
  We remove the root $r$ from the dipaths of $S_0$ to ensure that 
  $H$ does not contain the root of $E^*$, in order to satisfy the lemma 
  statement. 
\end{remark}

We continue this recursive process until
each component has at most one terminal.
Since the number of terminals halve
at each step, the total recursion depth is at most $\lceil \log 2k
\rceil = \lceil \log k \rceil + 1$.
For ease of notation, for $j \geq 1$
we denote by $S_j := \cup_{C \in \calC_{j-1}}
S_j^C$ the set of all dipaths constructed in the $j$th level of recursion
and let $D_j := \cup_{C \in \calC_{j-1}} D_j^C$.

\begin{claim}
\label{claim:all-terminals-sep}
  $D \subseteq \cup_{j \in 0}^{\lceil \log k \rceil + 1} D_j$.
\end{claim}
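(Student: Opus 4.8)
The plan is to argue that every terminal pair $(s_i,t_i)$ is ``captured'' by some separator constructed during the recursion, before the recursion bottoms out. I would track, for a fixed pair $(s_i,t_i) \in D$, a witnessing dipath $P_i \subseteq E^*$ from $s_i$ to $t_i$, together with the sequence of components $C_0 \supseteq C_1 \supseteq \cdots$ that we recurse into at successive levels; here $C_0$ is the whole instance $G^*$, and $C_{j+1}$ is the unique component of $\mathcal{C}_j$ that we descend to. The key dichotomy at level $j$ is: either $V(P_i)$ still lies entirely inside a single component $C$ of $\mathcal{C}_j$ (in which case $P_i$ survives intact into the recursive sub-instance on $C$, since contracting the separator $S_j$ into the root $r$ does not touch any vertex of $P_i$, and $P_i$ avoids $r$), or $V(P_i)$ meets the separator $S_j$ or gets split across two or more components of $\mathcal{C}_j$ — and in the latter case I claim $(s_i,t_i) \in D_{j}$ (or $D_0$ when $j=0$).

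The crux is the second case. If $V(P_i)$ is not contained in a single component of $\mathcal{C}_{j-1}$, then $P_i$ must cross from one component to another, but the only vertices connecting distinct components are the separator vertices $V(S_j)$ (after contraction, the root $r$, which $P_i$ avoids); hence $V(P_i) \cap V(S_j) \ne \emptyset$. Moreover $s_i,t_i$ both lie in the component $C \in \mathcal{C}_{j-1}$ we were recursing on (they have not been ``lost'' at an earlier level, by the first-case invariant), so $P_i$ is an $s_i$-$t_i$ dipath inside $C$ meeting a dipath of $S_j^C$, which is exactly the membership condition for $D_j^C \subseteq D_j$. I should also handle $j=0$ separately, where there is no contraction and $V(S_0)$ literally consists of vertices of the separator paths $P_{T^*}(r,u_i)$ minus $r$: the definition of $D_0$ allows $(s_i,t_i)$ with $P_i$ meeting $V(S_0)$ even if $s_i,t_i$ stay in one component, so the argument is only easier.

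Finally, if a pair $(s_i,t_i)$ is \emph{never} captured — i.e., at every level $j$ its witness path $P_i$ stays inside the component we descend into — then in particular $s_i$ and $t_i$ remain together in a common component throughout, contradicting the termination condition that every component at the bottom has at most one terminal: $P_i$ contains both $s_i$ and $t_i$, so that final component contains at least two terminal \emph{vertices}, hence (since a pair contributes two vertices of weight contributing to the count) the recursion could not have terminated with $\le 1$ terminal pair's worth of weight there. Thus every pair is captured at some level $j \le \lceil \log k\rceil + 1$, giving $D \subseteq \bigcup_{j=0}^{\lceil \log k\rceil+1} D_j$. The main obstacle I anticipate is bookkeeping the ``$P_i$ avoids the contracted root'' invariant cleanly across levels — one must be careful that the witness path chosen at level $j$ for the sub-instance is literally the restriction of the original $P_i$, which is legitimate precisely because the first-case hypothesis says $P_i$ lived entirely inside $C_j$ and therefore was untouched by all prior contractions.
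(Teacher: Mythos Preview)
Your proposal is correct and follows essentially the same approach as the paper: the paper's proof simply fixes $P_i \subseteq E^* \setminus \{r\}$, takes $j$ to be the first level at which $P_i$ meets $S_j$ (which must exist since $s_i$ and $t_i$ end up in different components), and observes that $P_i$ is then fully contained in the relevant component $C$ because otherwise it would have hit an earlier separator. Your inductive-invariant framing and the explicit bookkeeping about the contracted root are more verbose but amount to the same argument; the paper compresses all of it into three sentences.
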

\begin{proof}
  Fix $(s_i, t_i) \in D$, and let $P_i$ be an $s_i$-$t_i$ path in $E^* \setminus \{r\}$. 
  Let $j$ be
  the first recursive level such that $P_i$ intersects $S_j$; such a level
  must exist since by the last step of recursion, $s_i$ and $t_i$ are in different
  components. Let $C$ be the component such that $P_i$ intersects $S_j^C$.
  Then $P_i$ must be fully contained in $C$; else $P_i$ would have intersected
  a separator at an earlier level. Thus $(s_i, t_i) \in D_j^C \in D_j$.
\end{proof}

\begin{corollary}
\label{cor:good-recursion-level}
  There exists a recursion level $j^* \in \{0, \dots, \lceil \log k \rceil + 1\}$
  such that $|D_{j*}| \geq \frac k {\lceil \log k \rceil + 2}$.
\end{corollary}

\Cref{cor:good-recursion-level} allows us to focus on one recursion layer that covers a large number of
terminal pairs, and use the planar separators $S_{j^*}$ to reduce to the one
path case.

\begin{proof}[Reduction from \Cref{lem:two-layered} to \Cref{lem:one-path}]
  Let $j^*$ be the recursion level given by
  \Cref{cor:good-recursion-level} such that
  $|D_{j*}| \geq \frac k {\lceil \log k \rceil + 2}$.
  Recall that we define $\calC_{j^*}$ to be the set of all components at level
  $j^*$.
  Note that all components $C \in \calC_j^*$ are disjoint; therefore,
  $\sum_{C \in \calC_{j^*}} c(E(C)) \leq c(E^*)$. Furthermore, since
  $(s_i, t_i) \in D_{j^*}^C$ implies that $s_i, t_i \in C$, $D_{j^*}^C$ form
  a partition of $D_{j^*}$.
  Thus there must be one component $C \in \calC_{j^*}$
  such that $c(E(C))/|D_{j^*}^C| \leq c(E^*)/|D_{j^*}|$. Fix
  this component $C$.

  By construction, for all $(s_i, t_i) \in D_{j^*}^C$ there is an $s_i$-$t_i$
  path intersecting $S_{j^*}^C$ that is fully contained in 
  $C \setminus \{r\}$.
  Since $S_{j^*}^C$ consists of
  at most 6 dipaths, there must be at least one dipath, which we call
  $Q_{j^*}^C$, such that at least $\frac 1 6$ of the terminal pairs in $D_{j^*}^C$
  have paths that intersect $Q_{j^*}^C$; we denote this subset of terminal pairs by
  $D^*$.
  We apply \Cref{lem:one-path} on $(C \setminus \{r\}, D^*)$ to obtain
  a junction tree $H$ of density at most
  \[O(\log |D^*|) \frac{c(C)}{|D^*|} \leq
  O(\log k) \frac{6c(C)}{|D_{j^*}^C|} \leq 
  O(\log k) \frac{6c(E^*)}{|D_{j^*}|} \leq O(\log^2 k) c(E^*)/k.\]
\end{proof}

\subsection{One-Path Setting}
\label{sec:one-path}

In this section we prove \Cref{lem:one-path}, restated below:

\onepath*

Fix an instance $(G, D)$ of planar-DSF and a solution $E^*$ with
dipath $P \subseteq E^*$ satisfying the conditions outlined in the
statement of \Cref{lem:one-path} above. We will sometimes overload
notation and write $P$ as $V(P)$.  We label the vertices on $P$ as
$v_0, \dots, v_{|P|}$.  For each terminal pair $s_i$-$t_i$, let $a_i$
denote the first node in $P$ that $s_i$ can reach, and let $b_i$
denote the last node in $P$ that can reach $t_i$ (here, reachability
is defined using edges in $E^*$).  By the condition in
\Cref{lem:one-path}, $a_i \leq_P b_i$ for all $i \in [k]$; else no
$s_i$-$t_i$ path could intersect $P$. We let $I_i$ denote the interval
$P[a_i,b_i]$.  We let $P_{s_i}$ denote the path in $E^*$ from $s_i$ to
$a_i$ and let $P_{t_i}$ denote the path in $E^*$ from $b_i$ to
$t_i$. See \Cref{fig:one-path} for an example.

\begin{figure}
  \centering
  \includegraphics[width=0.7\linewidth]{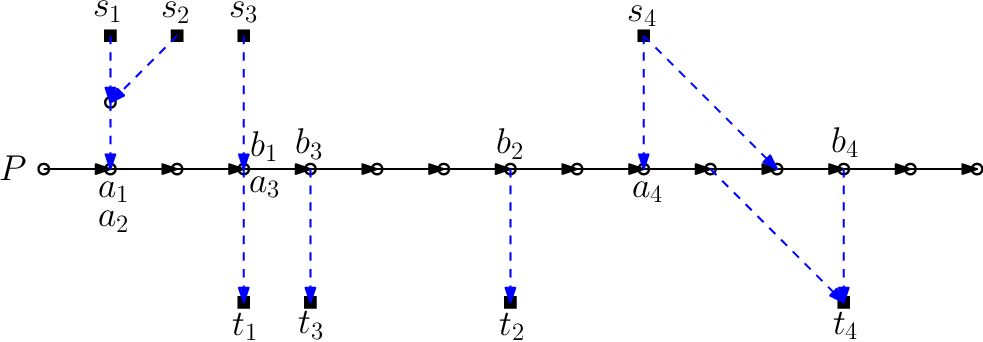}
  \caption{The path $P$ is given with solid black lines. Blue dashed
  lines represent the paths between terminals and $P$. Note that terminals
  can have multiple paths to/from $P$, as shown by $s_4$/$t_4$. In this example,
  terminal pairs $1,2,3$ all have mutually overlapping intervals and thus form
  a junction tree rooted at the vertex $b_1 = a_3$.}
  \label{fig:one-path}
\end{figure}

We start with a simple observation regarding these intervals
and their relation to junction trees; we show that if there exists a set of
intervals which all overlap at a common vertex, then we can form a junction tree
on the corresponding terminal pairs.

\begin{claim}
\label{claim:one-path-overlap}
  Let $D' \subseteq D$ such that $\cap_{i \in D'} I_i \neq \emptyset$, i.e.
  all intervals overlap.
  Let $a_{\text{start}} = \min_{i \in D'} a_i$ and $b_{\text{end}}
  = \max_{i \in D'} b_i$,
  where min and max are taken with respect to $\leq_P$. Then
  $H = P[a_{\text{start}},b_{\text{end}}] \cup \bigcup_{i \in D'}
  (P_{s_i} \cup P_{t_i})$ is a valid junction tree on $D'$.
\end{claim}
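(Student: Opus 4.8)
The plan is to verify directly that $H$ covers every terminal pair in $D'$ with the common root chosen to be any vertex $r$ in the nonempty intersection $\bigcap_{i \in D'} I_i$; in fact, since the claim statement sets $a_{\text{start}}$ and $b_{\text{end}}$ as the extremes, the natural choice of root is whichever vertex witnesses the intersection, and I would first fix such an $r \in \bigcap_{i \in D'} I_i$. The key observation is that $H$ contains the entire stretch $P[a_{\text{start}}, b_{\text{end}}]$ of $P$, and for each $i \in D'$ we have $a_{\text{start}} \leq_P a_i \leq_P r \leq_P b_i \leq_P b_{\text{end}}$ (the middle two inequalities because $r \in I_i = P[a_i, b_i]$, the outer two by definition of $a_{\text{start}}, b_{\text{end}}$). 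Hence the subpath $P[a_i, b_i]$, and in particular both $P[a_i, r]$ and $P[r, b_i]$, lie entirely inside $H$.

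Next I would assemble the two required dipaths for each $(s_i, t_i) \in D'$. The $s_i$-$r$ walk is $P_{s_i} \circ P[a_i, r]$: the first piece is an $s_i$-$a_i$ dipath in $E^*$ contained in $H$ by construction, and the second is a sub-dipath of $P$ contained in $H$ as argued above. Similarly, the $r$-$t_i$ walk is $P[r, b_i] \circ P_{t_i}$, again a concatenation of a sub-dipath of $P$ and the $b_i$-$t_i$ dipath $P_{t_i} \subseteq H$. Both concatenations are valid because $a_i \leq_P r \leq_P b_i$ guarantees $P[a_i, r]$ and $P[r, b_i]$ are genuine (possibly empty) dipaths of $P$ sharing the endpoint $r$. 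Since we only care about reachability / walks (as the paper notes, paths and walks are conflated), this establishes that $H$ with root $r$ is a junction tree on $D'$.

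There is essentially no obstacle here; the only mild subtlety is that $r$ is not literally $a_{\text{start}}$ or $b_{\text{end}}$ in general, so one must be slightly careful to note that the relevant sub-dipaths of $P$ are still inside the retained stretch $P[a_{\text{start}}, b_{\text{end}}]$ — this is immediate from the chain of $\leq_P$ inequalities. I would also remark that $H \subseteq E^*$ since $P$, the $P_{s_i}$, and the $P_{t_i}$ are all subgraphs of $E^*$, which is needed for the downstream density bookkeeping in the proof of Lemma~\ref{lem:one-path}.
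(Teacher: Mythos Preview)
Your proof is correct and follows essentially the same approach as the paper: fix a vertex $r$ (the paper calls it $v$) in $\bigcap_{i\in D'} I_i$, use the chain $a_{\text{start}}\le_P a_i\le_P r\le_P b_i\le_P b_{\text{end}}$ to conclude $P[a_i,r]$ and $P[r,b_i]$ lie in $H$, and concatenate with $P_{s_i}$ and $P_{t_i}$ respectively. Your additional remarks about walks versus paths and $H\subseteq E^*$ are fine but not needed for the claim itself.
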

\begin{proof}
  Let $v$ be some element in $\cap_{i \in D'} I_i$; this will be the root
  of the junction tree $H$. It suffices to show that for all $(s_i, t_i) \in D'$,
  $s_i$ can reach $v$ and $v$ can reach $t_i$ in $H$.
  Let $(s_i, t_i) \in D'$. By definition of $a_{\text{start}}$ and $b_{\text{end}}$,
  and since $v \in I_i$, we have that
  $a_{\text{start}} \leq_P a_i \leq_P v \leq_P b_i \leq_P b_{\text{end}}$.
  Therefore $P[a_i, v]$ and $P[v, b_i]$ are contained in $H$. Thus the
  $s_i$-$v$ path $P_{s_i} \circ P[a_i,v]$ is contained in $H$, as is the
  $v$-$t_i$ path $P[v, b_i] \circ P_{t_i}$, as desired.
\end{proof}

\Cref{claim:one-path-overlap} provides a natural way to obtain junction
trees in $E^*$: we partition $D$ into groups such that in each group,
all corresponding intervals overlap at a common vertex, and then form the
junction trees accordingly.
To partition $D$, we first separate terminal pairs based on their interval lengths;
recall that path lengths are defined in terms of number of edges, so the
length of the interval $I_i$ is the number of edges from $a_i$ to $b_i$ in $P$.
We let $D_0$ denote the set of all $(s_i, t_i) \in D$ such that $a_i = b_i$; these
correspond to $0$-length intervals.
For $j \in \{1, \dots, \log |P|+1\}$, let
$D_j = \{(s_i, t_i): |I_i| \in [2^{j-1}, 2^j)\}$. For $v \in P$,
we let $D_j^v \subseteq D_j$ be the set of all $(s_i, t_i) \in D_j$ such that
$v \in I_i$. We construct the set of groups
\[\calG = \{D_0^v: \exists i \in D_0 \text{ s.t. } a_i = b_i = v\} \cup
\bigcup_{j \in [\log|P|+1]} \{D_j^{v_{\ell}}: \ell \text{ is
a multiple of } 2^{j-1}\}.\]
Note that for each group $D_j^v$, $v \in \cap_{i \in D_{j}^v} I_i$.
Therefore, each group $D_j^v$ is associated with a
junction tree $H_j^v$ with root $v$ as given by \Cref{claim:one-path-overlap}.
We let $\calH$ denote the set of all such junction trees.

\begin{claim}
\label{claim:group-covers-intervals}
  Every $(s_i, t_i) \in D$ is in some group in $\calG$.
\end{claim}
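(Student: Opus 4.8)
The plan is to take an arbitrary terminal pair $(s_i, t_i) \in D$ and exhibit a group in $\calG$ that contains it. First I would split into the trivial case and the main case based on the length of the interval $I_i = P[a_i, b_i]$. If $a_i = b_i$, then $i \in D_0$, and by construction $\calG$ contains the group $D_0^{v}$ where $v = a_i = b_i$; since $v \in I_i$ (the single-vertex interval), we have $i \in D_0^v$, so we are done. This case is immediate from the definition of $\calG$.

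For the main case, suppose $|I_i| \geq 1$, and let $j$ be the unique index in $\{1, \dots, \log|P|+1\}$ with $|I_i| \in [2^{j-1}, 2^j)$, so that $i \in D_j$. It suffices to find a vertex $v_\ell$ on $P$ with $\ell$ a multiple of $2^{j-1}$ such that $v_\ell \in I_i$; then $i \in D_j^{v_\ell} \in \calG$. Writing $a_i = v_p$ and $b_i = v_q$ with $q - p = |I_i| \geq 2^{j-1}$, the point is that the interval of indices $[p, q]$ has length at least $2^{j-1}$ and therefore must contain at least one multiple of $2^{j-1}$ — concretely, $\ell = 2^{j-1}\lceil p / 2^{j-1}\rceil$ satisfies $p \le \ell \le p + 2^{j-1} - 1 \le p + (q-p) = q$ wait, I need $p + 2^{j-1} - 1 \le q$, which holds since $q - p \ge 2^{j-1} > 2^{j-1} - 1$. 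Hence $v_\ell \in P[a_i, b_i] = I_i$, completing the argument.

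I do not expect any genuine obstacle here; the only thing to be slightly careful about is the off-by-one in the ``every length-$2^{j-1}$ window of consecutive integers contains a multiple of $2^{j-1}$'' step, and making sure the boundary cases ($\ell = p$ or the endpoints) are handled by the closed-interval convention for $I_i$. The claim is essentially a bookkeeping verification that the chosen grid of anchor vertices $v_\ell$ (spaced $2^{j-1}$ apart for bucket $j$) is fine enough to hit every interval in bucket $j$, which is exactly why the buckets were defined with that spacing.
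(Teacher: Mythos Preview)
Your proposal is correct and follows essentially the same approach as the paper: both split on whether $a_i = b_i$, and in the main case both pick the smallest multiple $\ell$ of $2^{j-1}$ with $\ell \ge p$ (the paper phrases this as ``the first multiple of $2^{j-1}$ such that $v_\ell \geq_P a_i$'', which is exactly your $\ell = 2^{j-1}\lceil p/2^{j-1}\rceil$) and verify $\ell \le q$ using $q-p \ge 2^{j-1}$. The only cosmetic difference is that you carry the slightly sharper bound $\ell - p \le 2^{j-1}-1$ while the paper uses $\ell - p \le 2^{j-1}$; either suffices.
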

\begin{proof}
  Fix $(s_i, t_i) \in D$. If $a_i = b_i$, then $(s_i, t_i) \in D_0$, so 
  $(s_i, t_i) \in D_0^{a_i} \in \calG$.
  Else, $|I_i| \in \{1, \dots, |P|\}$, so $\exists j \in
  \{1, \dots, \log|P| + 1\}$ such that $(s_i, t_i) \in D_j$. Let $\ell$ be the first
  multiple of $2^{j-1}$ such that
  $v_{\ell} \geq_P a_i$. Then $P[a_i, v_{\ell}] \leq 2^{j-1}$. Since $(s_i, t_i) \in D_j$,
  $|I_i| \geq 2^{j-1}$; thus $v_{\ell} \leq_P b_i$. Therefore
  $v_{\ell} \in I_i$, so $(s_i, t_i) \in D_j^{v_\ell} \in \calG$.
\end{proof}

We will show that the junction trees in $\calH$ have, on average, low
density. To do so,
one must ensure that each edge $e \in E^*$ only appears in
$O(\log k)$ junction trees to maintain the cost bound. A technical difficulty
is reasoning about the edges of $E^* \setminus P$,
since the paths between terminals
and the path $P$ may intersect and share edges. The following key observation
provides some structure on these paths with respect to the intervals:

\begin{claim}
  \label{claim:one-path-disjoint}
  For any $i, i' \in [k]$, if
  $P_{s_i} \cap P_{s_{i'}} \neq \emptyset$, then $a_i = a_{i'}$.
  Similarly, $P_{t_i} \cap P_{t_{i'}} \neq \emptyset$, then $b_i = b_{i'}$.
\end{claim}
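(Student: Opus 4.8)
The plan is to exploit the extremal definitions of $a_i$ and $b_i$ together with transitivity of reachability in $E^*$. For the first statement, suppose $P_{s_i}$ and $P_{s_{i'}}$ share a vertex $w$. Since $P_{s_i}$ is a dipath in $E^*$ from $s_i$ to $a_i$ passing through $w$, the vertex $s_i$ can reach $w$ in $E^*$; likewise, since $P_{s_{i'}}$ is a dipath from $s_{i'}$ to $a_{i'}$ through $w$, the vertex $w$ can reach $a_{i'}$ in $E^*$. Concatenating, $s_i$ can reach $a_{i'}$ in $E^*$, and $a_{i'}$ lies on $P$. Because $a_i$ is by definition the earliest vertex of $P$ (with respect to $<_P$) reachable from $s_i$, this forces $a_i \leq_P a_{i'}$. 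The roles of $i$ and $i'$ are symmetric — $s_{i'}$ reaches $w$ and $w$ reaches $a_i$ — so $a_{i'} \leq_P a_i$ as well, hence $a_i = a_{i'}$.

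For the second statement I would argue dually using $b_i$. If $P_{t_i}$ and $P_{t_{i'}}$ share a vertex $w'$, then since $P_{t_i}$ runs from $b_i$ to $t_i$ through $w'$ we get that $b_i$ reaches $w'$ in $E^*$, and since $P_{t_{i'}}$ runs from $b_{i'}$ to $t_{i'}$ through $w'$ we get that $w'$ reaches $t_{i'}$; concatenating shows $b_i$ reaches $t_{i'}$ in $E^*$. Since $b_{i'}$ is by definition the latest vertex of $P$ that can reach $t_{i'}$, this gives $b_i \leq_P b_{i'}$, and by the symmetric argument $b_{i'} \leq_P b_i$, so $b_i = b_{i'}$.

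The argument is short and I do not anticipate a genuine obstacle; the only points that need care are (i) interpreting ``first''/``last'' vertex of $P$ as the $<_P$-extremal one, so that the comparisons $a_i \leq_P a_{i'}$ and $b_i \leq_P b_{i'}$ are meaningful and follow directly from the definitions, and (ii) observing that the conclusion is independent of which particular $s_i$-$a_i$ and $b_i$-$t_i$ dipaths in $E^*$ were fixed as $P_{s_i}$ and $P_{t_i}$, since the proof uses only the reachability relations they witness. I would also note that the path/walk distinction is irrelevant here, consistent with the convention adopted in the preliminaries.
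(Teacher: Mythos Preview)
Your proof is correct and essentially identical to the paper's: both take a shared vertex on the two paths, concatenate the appropriate prefixes and suffixes to witness reachability to the other terminal's endpoint on $P$, and then invoke the extremal definitions of $a_i$ (earliest reachable node on $P$) and $b_i$ (latest node on $P$ that can reach $t_i$) to obtain the two inequalities. The only cosmetic difference is that the paper assumes $a_i \leq_P a_{i'}$ without loss of generality and derives the reverse inequality once, whereas you argue both directions explicitly by symmetry.
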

\begin{proof}
  Consider $i, i' \in [k]$, and suppose without loss of
  generality that $a_i \leq_P a_{i'}$. Let $v \in P_{s_i} \cap P_{s_{i'}}$.
  Then $P_{s_{i'}}[s_{i'},v] \circ P_{s_i}[v,a_i]$ is a path from
  $s_{i'}$ to $a_i$ in $E^*$. Since we defined $a_{i'}$ as the earliest
  point that $s_{i'}$ can reach on $P$, it must be the case that
  $a_{i'} \leq_P a_i$, so $a_i = a_{i'}$.
  An analogous argument shows that for any $i, i' \in [k]$,
  if $P_{t_i} \cap P_{t_{i'}} \neq \emptyset$, then $b_i = b_{i'}$.
\end{proof}

\begin{claim}
\label{claim:one-path-few-groups}
  Each node in $P$ appears in at most $5\log|P|+6$
  junction trees in $\calH$. The same holds for each edge in $P$.
\end{claim}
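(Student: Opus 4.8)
The claim asks us to bound, for a fixed vertex $v_\ell \in P$, the number of junction trees $H_j^{v}$ in $\calH$ whose vertex set contains $v_\ell$ (and similarly for a fixed edge of $P$). The plan is to account separately for the two ways $v_\ell$ can land in $H_j^v$: either $v_\ell$ lies on the ``spine'' $P[a_{\text{start}}, b_{\text{end}}]$ of the junction tree, or $v_\ell$ is an internal vertex of one of the attached paths $P_{s_i}$ or $P_{t_i}$ for some $i$ in the group.

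For the spine contribution, I would fix a scale $j$ and observe that the groups at scale $j$ are the sets $D_j^{v_m}$ where $m$ ranges over multiples of $2^{j-1}$; their spines are intervals of $P$ of length at most $2^{j+1}$ or so (each $I_i$ in $D_j$ has length $< 2^j$, and $a_{\text{start}}, b_{\text{end}}$ extend it by at most $2^{j-1}$ on each side, giving total length under $2\cdot 2^j$). Since the roots $v_m$ are spaced $2^{j-1}$ apart, a fixed vertex $v_\ell$ can lie in at most $O(1)$ — concretely about $5$ — such spines; summing over the $\log|P|+1$ nontrivial scales plus the $D_0$ case gives the $5\log|P| + O(1)$ bound. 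I'd do the integer arithmetic carefully to land exactly on $5\log|P|+6$. The same count works verbatim for an edge of $P$, since an edge of $P$ is ``in'' a spine iff both its endpoints are, which is governed by the same interval containment.

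For the attached-path contribution, here is where Claim~\ref{claim:one-path-disjoint} does the work. Suppose $v_\ell$ is an internal vertex of some $P_{s_i}$ with $i$ in a group $D_j^v$. By Claim~\ref{claim:one-path-disjoint}, any other $i'$ with $v_\ell \in P_{s_{i'}}$ must have $a_{i'} = a_i$, and symmetrically for the $P_{t_i}$ side with a common $b$. So all the source-side paths through $v_\ell$ share one endpoint $a = a_i$ on $P$, and all the sink-side paths through $v_\ell$ share one endpoint $b$ on $P$. Now I'd argue that a vertex $a \in P$ is the common source-endpoint for at most one group per scale: within scale $j$, if $i \in D_j^{v}$ has $a_i = a$ then $v$ is the unique multiple-of-$2^{j-1}$ grid point with $a \le_P v$ and $|P[a,v]| \le 2^{j-1}$ (the construction in Claim~\ref{claim:group-covers-intervals} picks $v$ deterministically from $a_i$, using $|I_i| \ge 2^{j-1}$), so $a$ determines $v$ at scale $j$. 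Hence $v_\ell$ appears as an internal attached-path vertex in at most one group per scale on the source side and at most one per scale on the sink side — that's an extra $O(\log|P|)$, which folds into the stated constant. Note an edge of $E^* \setminus P$ is irrelevant to this claim, which only concerns nodes/edges \emph{of} $P$; an edge of $P$ that happens to also lie on some $P_{s_i}$ is handled by the same endpoint-sharing argument since sharing an edge is a special case of sharing a vertex.

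The main obstacle I anticipate is getting the bookkeeping clean enough to hit the precise constant $5\log|P|+6$ rather than a sloppy $O(\log|P|)$: one has to be careful that the $\le 6$ dipaths per separator level, the at-most-6 spine-intersections per scale, and the off-by-one issues in ``first multiple of $2^{j-1}$'' all line up. A secondary subtlety is making sure the attached-path case genuinely contributes only $O(\log |P|)$ and not something that blows up — this rests entirely on Claim~\ref{claim:one-path-disjoint} forcing a shared endpoint, so I would state that reduction explicitly before counting. Everything else is routine once the spine/attachment dichotomy is in place.
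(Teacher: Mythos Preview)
Your spine-counting portion is essentially the paper's entire argument. The paper does \emph{not} split into a spine case and an attached-path case for this claim: it simply observes that $H_j^v \cap P = P[a_{\text{start}}, b_{\text{end}}]$ (so the only way a vertex of $P$ lies in $H_j^v$ is via the spine), and then performs exactly your spine count---one group at $j=0$ and at most five grid-point roots at each scale $j \ge 1$, giving $1 + 5(\log|P|+1) = 5\log|P|+6$. The attached-path accounting you describe is the content of the \emph{next} claim, Claim~\ref{claim:outside-path-few-groups}, which treats vertices of $V(E^*)\setminus P$; it is not invoked here.

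Your attached-path step also contains a real error. You assert that if $i \in D_j^{v}$ has $a_i = a$, then $v$ is the unique scale-$j$ grid point with $a \le_P v$ and $|P[a,v]| \le 2^{j-1}$. That is not what membership in $D_j^v$ means: it requires only $v \in I_i$, and since $|I_i|$ can be as large as $2^j-1$ there may be two grid points at spacing $2^{j-1}$ inside $I_i$. Claim~\ref{claim:group-covers-intervals} shows each $i$ lands in \emph{some} group by taking the first grid point after $a_i$, but it does not say that group is the only one containing $i$. So ``$a$ determines $v$ at scale $j$'' fails, and your per-scale count of one on each side is too small. Even if you patched this (for instance by noting that whenever $a_i=a$ for some $i \in D_j^v$ one has $a$ in the spine of $H_j^v$, and then reusing the spine bound---which is precisely how the paper proves Claim~\ref{claim:outside-path-few-groups}), the extra contribution would push you above $5\log|P|+6$, so you would not recover the stated constant. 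The clean route is the paper's: argue once that $H_j^v \cap P$ is exactly the spine interval, and count spines only.
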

\begin{proof}
  Let $u \in P$. By \Cref{claim:one-path-overlap}, for any
  $H_j^v \in \calH$, $H_j^v \cap P = P[a_{\text{start}}, b_{\text{end}}]$, where
  $a_{\text{start}}$ is the first interval start point and $b_{\text{end}}$ is the last
  interval end point of all intervals of $D_j^v$. Notice that since
  the intervals of $D_j^v$ overlap at a common vertex, $P[a_{\text{start}}, b_{\text{end}}]$ 
  is equivalent to $\cup_{i \in D_{j^v}} I_i$.

  First, consider $j = 0$. In this case, for any $v$,
  $\cup_{i \in D_0^v} I_i = \{v\}$. Thus $u \in D_0^v$ if and only if
  $u = v$, so $u$ is in at most one group when $j = 0$.

  Next, fix $j \geq 1$, and consider some $v_{\ell}$ such that
  $D_j^{v_\ell} \in G$. Note that $(s_i, t_i) \in D_j^{v_\ell}$ implies that
  $|I_i| < 2^j$ and $v_\ell \in I_i$. Therefore, it must be the case
  that $a_i >_P v_{\ell - 2^j}$ and $b_i <_P v_{\ell + 2^j}$.
  Thus for any $D_j^{v_\ell} \in \calG$,
  $\cup_{i \in D_j^{v_\ell}} I_i \subseteq P[v_{\ell-2^j},v_{\ell+2^j}]$.
  Therefore if $u \in D_j^{v_\ell}$, then $v_{\ell}$ has to be within
  $2^j$ edges of $u$. Since $\calG$ only contains $D_j^{v_\ell}$ for
  $\ell$ a multiple of $2^{j-1}$, there are at most $5$ values of
  $\ell$ that are multiples of $2^{j-1}$ such that $v_{\ell}$ can either
  reach or be reached by $u$ within $2^j$ edges. Therefore, $u$ is in at most
  $5$ groups for any fixed $j$. Summing over all $j = 1, \dots, \log|P|+1$
  gives the desired bound.

  Each edge $e \in P$ is only in a junction tree $H$ if both its endpoints
  are also in $H$. Thus the same upper bound holds for each edge in $P$.
\end{proof}

\begin{claim}
\label{claim:outside-path-few-groups}
  Each node in $V(E^*) \setminus P$ appears in at most $10\log|P| + 12$
  junction trees in $\calH$. The same holds for each edge in $E^* \setminus P$.
\end{claim}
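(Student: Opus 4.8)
The idea is to piggyback on the already-established count for vertices \emph{on} $P$ (\Cref{claim:one-path-few-groups}). Recall $H_j^v = P[a_{\text{start}},b_{\text{end}}] \cup \bigcup_{i \in D_j^v}(P_{s_i}\cup P_{t_i})$, so for a vertex $w \in V(E^*)\setminus P$, the membership $w \in H_j^v$ forces $w \in P_{s_i}$ or $w \in P_{t_i}$ for some $i \in D_j^v$. I would therefore split the junction trees of $\calH$ containing $w$ into those ``witnessed'' by a source path $P_{s_i}$ and those ``witnessed'' by a sink path $P_{t_i}$, and bound each family by $5\log|P|+6$, giving the claimed $10\log|P|+12$.

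For the source side, the key leverage is \Cref{claim:one-path-disjoint}: if $w$ lies on both $P_{s_i}$ and $P_{s_{i'}}$ then $a_i = a_{i'}$. Hence there is a single vertex $a(w) \in P$ equal to the common value of $a_i$ over all $i$ with $w \in P_{s_i}$. Now whenever $w \in H_j^v$ via some $i \in D_j^v$ with $w \in P_{s_i}$, the entire path $P_{s_i}$ lies in $H_j^v$, and in particular its endpoint $a_i = a(w)$ lies in $H_j^v$. So every such junction tree contains the $P$-vertex $a(w)$, and \Cref{claim:one-path-few-groups} says at most $5\log|P|+6$ junction trees contain a fixed vertex of $P$. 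Symmetrically, using the sink statement of \Cref{claim:one-path-disjoint} and the well-defined vertex $b(w) \in P$ (the common value of $b_i$ over $i$ with $w \in P_{t_i}$), at most $5\log|P|+6$ junction trees contain $w$ via a sink path. Adding the two bounds gives the result for vertices.

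For an edge $e \in E^*\setminus P$: if $e \in H_j^v$ then $e \in P_{s_i}$ or $e \in P_{t_i}$ for some $i \in D_j^v$. An edge lying on both $P_{s_i}$ and $P_{s_{i'}}$ has its endpoints in both paths, so $P_{s_i}\cap P_{s_{i'}}\neq\emptyset$ as vertex sets and \Cref{claim:one-path-disjoint} again yields a well-defined vertex $a(e)\in P$ (resp.\ $b(e)\in P$) lying in every junction tree that contains $e$ via a source path (resp.\ sink path). Invoking the node bound of \Cref{claim:one-path-few-groups} for $a(e)$ and for $b(e)$ gives the same $10\log|P|+12$ bound.

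I do not expect a genuine obstacle here; the main thing to be careful about is that a single junction tree may contain $w$ through both a source path and a sink path at once, so the two families overlap, but since only an upper bound is needed this double counting is harmless. The one genuinely essential ingredient is \Cref{claim:one-path-disjoint}, which is exactly what collapses the statement ``some $P_{s_i}$ passes through $w$'' into ``the one fixed vertex $a(w)\in P$ is present,'' letting us transfer the $P$-vertex bound to off-path vertices and edges.
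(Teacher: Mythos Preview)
Your proposal is correct and follows essentially the same approach as the paper: use \Cref{claim:one-path-disjoint} to collapse the off-path vertex $w$ to two fixed $P$-vertices $a(w),b(w)$, then apply the on-path bound of \Cref{claim:one-path-few-groups} to each and add. The only cosmetic difference is that for edges the paper simply notes that an edge lies in $H$ only if its endpoints do and inherits the node bound directly, whereas you redo the $a(e),b(e)$ argument; both are fine.
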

\begin{proof}
  Fix $u \in V(E^*) \setminus P$. If $u$ is in any junction tree $H$,
  it must be in some $P_{s_i}$ and/or some $P_{t_i}$; it may be in many
  such paths for various terminal pairs.
  Let $D_s = \{i: u \in P_{s_i}\}$ and $D_t = \{i: u \in P_{t_i}\}$.

  By \Cref{claim:one-path-disjoint},
  there exists some node $a \in P$ such that for all $i \in D_s$,
  $a_i = a$. Similarly, there exists some $b \in P$ such that
  for all $i \in D_t$, $b_i = b$. By construction of
  junction trees in \Cref{claim:one-path-overlap}, for any
  $H \in \calH$, $u \in H$ only if $a \in H$ or $b \in H$.
  By \Cref{claim:one-path-few-groups}, $a$ and $b$ are each
  in at most $5 \log |P| + 6$ junction trees in $\calH$. Therefore,
  $u$ is in at most $2(5 \log |P| + 6)$ junction trees in $\calH$.

  Each edge $e \in E^*$ is only in a junction tree $H$ if both its endpoints
  are also in $H$. Thus the same upper bound holds for each edge in $E^*$.
\end{proof}

We conclude the proof of the main lemma:
\begin{proof}[Proof of \Cref{lem:one-path}]
  By
  \Cref{claim:one-path-few-groups} and \Cref{claim:outside-path-few-groups},
  each edge of $E^*$ is in at most $O(\log|P|)$ junction trees, so
  $\sum_{H \in \calH} c(H) \leq O(\log |P|) c(E^*)$.
  We note that while $|P|$ could be as large as $\Theta(n)$, we can effectively assume
  $|P| \leq 2k$ as follows. First, we assume
  $E^* = P \cup_{i \in [k]} (P_{s_i} \cup P_{t_i})$; these are the
  only edges used in junction trees $\calH$ and constitutes a feasible
  solution. Then, we can ignore
  all degree-2 nodes in $P$: if $v_i \in P$ has degree 2 in $E^*$,
  we can replace the edges $e' = (v_{i-1}, v_i)$ and $e'' = (v_i, v_{i+1})$
  with an edge $e = (v_{i-1}, v_{i+1})$ of cost $c(e') + c(e'')$ without
  changing feasibility of $E^*$. The only nodes in $P$ that have degree
  greater than 2 in $E^*$ are the points $a_i, b_i$ for $i \in [k]$.
  Thus we can assume $|P| \leq 2k$, and
  $\sum_{H \in \calH} c(H) \leq O(\log k) c(E^*)$.

  By \Cref{claim:group-covers-intervals}, all terminal pairs are
  covered by at least one junction tree in $\calH$. Therefore, the
  total density of junction trees in $\calH$ is $O(\log k) c(E^*)/k$.
  An averaging argument
  shows that there must be at least
  one $H^* \in \calH$ that has density at most $O(\log k) c(E^*)/k$.
\end{proof}

\section{Finding a good junction tree}
\label{sec:algo}

In this section we show that there exists an efficient algorithm to
find an approximate min-density junction tree,
proving \Cref{thm:algo-main} restated below:
\algo*

We employ an LP-based approach. We consider a natural cut-based LP relaxation
for DST with variables $x_e \in [0,1]$ for $e \in E$ indicating whether or not
$e$ is in the solution. Here, the input is a digraph $G = (V, E)$ with root $r$
and terminals $t_{i}$, $i \in [k]$.

\begin{equation}
\label{DST-LP}
\tag{DST-LP}
\begin{aligned}
  \min\quad \sum_{e\in E}c(e)x_e& \\
  s.t.\quad \sum_{e\in \delta^+(S)} x_e&\geq 1 \quad
  \forall S \subseteq V, r \in S, \exists i \text{ s.t. } t_i \notin S \\
      x_e &\geq 0 \quad \forall\ e\in E
\end{aligned}
\end{equation}

We prove the following lemma:

\begin{lemma}
\label{lem:algo-main}
  Suppose there exists an $\alpha$-approximation for DST in planar graphs
  with respect to the optimal solution to \ref{DST-LP}. Then, given a
  planar-DSF instance $(G, D)$, there exists an efficient algorithm to
  obtain a junction tree of $G$ of density at most $O(\alpha \cdot \log k)$
  times the optimal junction tree density in $G$.
\end{lemma}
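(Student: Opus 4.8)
The plan is to reduce the min-density junction tree problem to two separate directed Steiner tree instances—one for the "in-star" connecting sources to the root and one for the "out-star" connecting the root to sinks—and combine them via a bucketing-and-scaling argument. Let me think through the details.

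We want to find a junction tree of density $O(\alpha \log k)$ times optimal. Suppose the optimal junction tree has root $r^*$, demand set $D^*$, and cost $c^*$, so its density is $c^*/|D^*|$. We don't know $r^*$ or $D^*$, so we'd guess $r^* = r$ over all $n$ choices of root (polynomial). For a fixed root $r$, we want to find a subgraph $H$ with root $r$ connecting some subset $D_H$ of pairs with $c(H)/|D_H|$ small.

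Key idea: a junction tree rooted at $r$ on demand set $D'$ decomposes into (i) an "in-branching" $H^-$ that has an $s_i$-$r$ path for each $i \in D'$—equivalently, reversing all edges, a directed Steiner tree rooted at $r$ with sinks $\{s_i : i \in D'\}$—and (ii) an "out-branching" $H^+$ that is a directed Steiner tree rooted at $r$ with sinks $\{t_i : i \in D'\}$. The cost of the junction tree is at most $c(H^-) + c(H^+)$. So finding a good-density junction tree amounts to, roughly, finding low-cost Steiner trees from $r$ that cover many sources and low-cost Steiner trees from $r$ that cover many sinks, on overlapping demand sets. We need the *same* demand set covered by both sides, which is the crux.

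The bucketing-and-scaling step (following \cite{chekuri_approximation_2010}): We can't directly solve "min-density DST" via the LP, but we can do the following. Write down \ref{DST-LP} for the out-side with root $r$ and sinks $\{t_i\}_{i \in [k]}$, but relax it to a *density* LP: introduce variables $z_i \in [0,1]$ indicating whether sink $t_i$ is "served," replace the cut constraint by $\sum_{e \in \delta^+(S)} x_e \ge z_i$ for all $S$ separating $r$ from $t_i$, and similarly set up the in-side LP (on the reversed graph) with the *same* variables $z_i$ for source $s_i$. Minimize $(\sum_e c(e) x_e^+ + \sum_e c(e) x_e^-)$ subject to $\sum_i z_i \ge 1$ (normalization). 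The optimal fractional value of this combined LP is at most the optimal junction tree density: take the optimal integral junction tree, scale so $\sum_i z_i = 1$. Now, given a fractional solution, bucket the demands by the value $z_i$ into $O(\log k)$ buckets $B_\ell = \{i : z_i \in [2^{-\ell-1}, 2^{-\ell})\}$; one bucket $B_{\ell^*}$ carries an $\Omega(1/\log k)$ fraction of the total $\sum_i z_i$. Scale up the $x^+$ and $x^-$ restricted to serving $B_{\ell^*}$ by a factor $2^{\ell^*+1}$: this gives a *feasible* fractional solution to \ref{DST-LP} for the out-DST instance with sinks $\{t_i : i \in B_{\ell^*}\}$ and the in-DST instance with sources $\{s_i : i \in B_{\ell^*}\}$, of total cost $O(2^{\ell^*} \cdot \text{LP})$. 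Apply the $\alpha$-approximation (with respect to \ref{DST-LP}) to each side to get integral Steiner trees $H^+, H^-$ of cost $O(\alpha \cdot 2^{\ell^*} \cdot \text{LP})$ covering *all* of $B_{\ell^*}$. Then $H = H^+ \cup H^-$ is a junction tree on $\ge |B_{\ell^*}|$ pairs, and since $|B_{\ell^*}| \ge 2^{\ell^*} \cdot \frac{\sum_i z_i}{O(\log k)} = \Omega(2^{\ell^*}/\log k)$ (using $z_i \le 2^{-\ell^*}$ on the bucket and $\sum_i z_i \ge 1$), the density of $H$ is $O(\alpha \cdot 2^{\ell^*} \cdot \text{LP}) / \Omega(2^{\ell^*}/\log k) = O(\alpha \log k \cdot \text{LP}) \le O(\alpha \log k)$ times the optimal junction tree density. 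Iterating over all roots $r$ and taking the best gives the lemma.

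The main obstacle—and the reason the LP is essential rather than a purely combinatorial min-density DST subroutine—is coordinating the two sides to cover a *common* demand set: a min-density-DST oracle for the out-side would commit to one subset of $t_i$'s and a separate oracle for the in-side to an unrelated subset of $s_i$'s, with no control over the overlap. The shared fractional variables $z_i$ and the bucketing step are exactly what force both Steiner trees to serve the same bucket $B_{\ell^*}$; the planarity of $G$ is preserved under edge reversal, so the planar-DST approximation applies on both sides. A minor technical point to handle carefully is that when we reverse edges for the in-side, planarity and the LP-relative guarantee $\alpha$ both transfer, and that the union $H^+ \cup H^-$ keeping $r$ as the common root indeed yields $s_i$-$r$-$t_i$ walks for every $i \in B_{\ell^*}$.
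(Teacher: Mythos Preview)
Your proposal is correct and follows essentially the same approach as the paper: guess the root, write a density LP with shared fractional coverage variables tying the in- and out-sides together, bucket the demands by their coverage value into $O(\log k)$ classes, pick a bucket carrying an $\Omega(1/\log k)$ fraction of the mass, scale the edge variables to obtain feasible \ref{DST-LP} solutions on both the original and reversed graph for that bucket, and invoke the $\alpha$-approximation on each side. The only cosmetic difference is that the paper uses a single edge vector $x$ shared between the in- and out-constraints (its \ref{Den-LP}), whereas you keep separate $x^+$ and $x^-$; both formulations yield valid lower bounds on the optimal junction-tree density up to a constant, so the argument and the final $O(\alpha\log k)$ bound are unchanged.
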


It was recently shown by \cite{chekuri_directed_2024}
that there exists an $O(\log^2 k)$ approximation
for DST in planar graphs with respect to the optimal solution to \ref{DST-LP}.
Therefore it suffices to prove \Cref{lem:algo-main} to prove
\Cref{thm:algo-main}.

Let $(G, D)$ be an instance to planar-DSF.
We start by guessing the root $r$ of the junction structure, as we can
repeat this algorithm for each $r \in V$ and choose the resulting junction
structure of minimum density. We consider the following LP relaxation for 
finding the minimum density junction tree rooted at $r$.
We follow a similar structure to that of
\ref{DST-LP}, with additional variables $y_{s_i}$ and $y_{t_i}$ for each
$i \in [k]$ to indicate whether or not $s_i$ and $t_i$ are included in the
solution. We ensure that $y_{s_i} = y_{t_i}$ so that the junction tree
includes complete pairs rather than individual terminals.
We also change the direction of flow from each $s_i$ to the root $r$.
The resulting minimum density would be $(\sum_{e \in E} c(e)x_e)/
(\sum_{i \in [k]} y_{t_i})$; we normalize $\sum_{i \in [k]} y_{t_i} = 1$.

\begin{equation}
\label{Den-LP}
\tag{Den-LP}
\begin{aligned}
  \min\quad \sum_{e\in E}c(e)x_e& \\
  s.t.\quad \sum_{e\in \delta^+(S)} x_e&\geq y_{t_i} \quad
  \forall i \in [k], \forall S \subseteq V, r \in S, t_i \notin S \\
  \sum_{e\in \delta^-(S)} x_e&\geq y_{s_i} \quad
  \forall i \in [k], \forall S \subseteq V, r \in S, s_i \notin S \\
  y_{s_i} &= y_{t_i} \quad \forall i \in [k] \\
  \sum_{i \in [k]} y_{t_i} &= 1 \\
  x_e, y_{s_i}, y_{t_i} &\geq 0 \quad \forall\ e\in E, i \in [k]\\
  &
\end{aligned}
\end{equation}

We claim that \ref{Den-LP} provides a valid lower bound for
the optimum density of a junction tree through $r$.

\begin{claim}
\label{claim:den-lp-valid}
  For any junction tree $H$ of $G$, there exists a feasible fractional
  solution $(x, y)$ to \ref{Den-LP} such that
  $\sum_{e \in E} c(e)x_e = c(H)/|D_H|$, where $D_H$ is the set
  of terminal pairs covered by $H$.
\end{claim}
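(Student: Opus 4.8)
The plan is to take the junction tree $H$ with root $r$ and covered pairs $D_H$, and build the fractional solution by simply scaling the natural $\{0,1\}$-indicator of $H$ by $1/|D_H|$. Concretely, for each edge $e \in E$ set $x_e = \mathbf{1}[e \in E(H)] / |D_H|$, and for each $i \in [k]$ set $y_{s_i} = y_{t_i} = \mathbf{1}[(s_i,t_i) \in D_H] / |D_H|$. Then $\sum_{i} y_{t_i} = |D_H|/|D_H| = 1$, so the normalization constraint holds, and $\sum_e c(e) x_e = c(H)/|D_H|$ as required. The equality $y_{s_i} = y_{t_i}$ holds by construction, and nonnegativity is immediate.

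It remains to verify the two families of cut constraints. First I would handle the $t_i$ constraints: fix $i$ with $(s_i,t_i)\in D_H$ and any $S$ with $r\in S$, $t_i\notin S$. Since $H$ contains an $r$-$t_i$ dipath (by definition of junction tree), that dipath must cross from $S$ to $V\setminus S$, so $\delta^+(S)$ contains at least one edge of $H$, giving $\sum_{e\in\delta^+(S)} x_e \ge 1/|D_H| = y_{t_i}$. If $(s_i,t_i)\notin D_H$ then $y_{t_i}=0$ and the constraint is trivially satisfied by nonnegativity of $x$. The $s_i$ constraints are symmetric: $H$ contains an $s_i$-$r$ dipath, which must use an edge of $\delta^-(S)$ whenever $r\in S$ and $s_i\notin S$, so $\sum_{e\in\delta^-(S)} x_e \ge 1/|D_H| = y_{s_i}$.

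This argument is entirely routine — there is no real obstacle, since the junction tree definition was engineered precisely so that its indicator (scaled) satisfies these cut constraints. The only minor point to be careful about is the case $(s_i,t_i)\notin D_H$, where the corresponding $y$ variables are zero and no connectivity is promised; this is dispatched by nonnegativity. One should also note that $H$ being a subgraph of $G$ ensures $x$ is supported on $E(G)$, and that $|D_H| \ge 1$ (a junction tree covers at least one pair, so the division is well-defined).
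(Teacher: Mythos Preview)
Your proposal is correct and essentially identical to the paper's proof: both set $x_e = \mathbf{1}[e\in H]/|D_H|$ and $y_{s_i}=y_{t_i}=\mathbf{1}[(s_i,t_i)\in D_H]/|D_H|$, then verify the cut constraints via the $s_i$-$r$ and $r$-$t_i$ paths guaranteed by the junction tree definition. Your write-up is slightly more explicit about the path-crossing-the-cut argument and the trivial case $(s_i,t_i)\notin D_H$, but the substance is the same.
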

\begin{proof}
  Let $H$ be any junction tree of $G$, let $D_H \subseteq D$ be the
  terminal pairs covered by $H$. Consider $(x, y)$ given by
  $x_e = 1/|D_H|$ if $e \in H$ and $0$ otherwise,
  $y_{t_i} = 1/|D_H|$ if $(s_i, t_i) \in D_H$ and $0$ otherwise, and
  $y_{s_i} = y_{t_i}$ for all $i \in [k]$. For each $(s_i, t_i) \in D_H$,
  since $H$ contains
  an $s_i$-$r$ path and an $r$-$t_i$ path,
  $x$ supports a flow of $1/|D_H|$ from $s_i$ to $r$ and $r$ to $t_i$; 
  thus the first two sets of constraints are satisfied.
  It is easy to verify that the rest of the constraints are satisfied
  and that $\sum_{e \in E} c(e)x_e = c(H)/|D_H|$.
\end{proof}
Despite the fact that the LP has exponentially many constraints, it can be
solved efficiently via a separation oracle: suppose we are given a
fractional solution $(x, y)$. The first two sets of constraints are satisfied
if for every $i \in [k]$, $x$ supports a flow of at least $y_{t_i}$
from $r$ to $t_i$ and a flow of at least $y_{s_i}$ from $s_i$ to $r$;
these can be checked via min-cut computations. There are only polynomially
many remaining constraints; thus these can be checked in polynomial
time. One can also write a compact LP via additional flow
variables. 

To find a junction tree of $G$,
We first solve \ref{Den-LP} to obtain an optimal fractional solution $(x^*, y^*)$.
For $j = 0, \dots, \log k$, we let
$D_j = \{(s_i, t_i) \in D: y_{t_i} \in (\frac 1 {2^{j+1}}, \frac 1 {2^j}]\}$.
We will show that there exists a group $\theta \in \{0, \dots, \log k\}$
for which the
total $y^*$ value is large; thus $x^*$ supports a good fraction of flow
from the root to/from $D_\theta$.

\begin{claim}
\label{claim:algo_good_group}
  There exists $\theta \in \{0, \dots, \log k\}$ such that
  $\sum_{i \in D_\theta} y_{t_i} \geq 1/(2\log k+2)$.
\end{claim}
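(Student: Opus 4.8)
The plan is to prove \Cref{claim:algo_good_group} by a straightforward averaging (pigeonhole) argument on the buckets $D_0, \dots, D_{\log k}$. First I would observe that these buckets partition the terminal pairs that have strictly positive $y^*$ value: every pair with $y_{t_i}^* > 0$ satisfies $y_{t_i}^* \le 1$ (since $\sum_i y_{t_i}^* = 1$ and all terms are nonnegative, assuming $k \ge 2$ so no single term is forced to exceed $1$; the $k=1$ case is trivial), and more importantly $y_{t_i}^* > 1/(2^{\log k + 1}) = 1/(2k)$ cannot fail for \emph{all} pairs that matter — but rather the point is that any pair with $y_{t_i}^* \le 1/(2k)$ contributes at most $1/(2k)$ to the total, so across all $k$ pairs such tiny contributions sum to at most $k \cdot \frac{1}{2k} = \frac12$. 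Hence the pairs landing in $D_0 \cup \dots \cup D_{\log k}$ account for a total $y^*$ mass of at least $1 - \frac12 = \frac12$.

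Next I would apply pigeonhole across the $\log k + 1$ buckets: since $\sum_{j=0}^{\log k} \sum_{i \in D_j} y_{t_i}^* \ge \frac12$, there must be some index $\theta$ with $\sum_{i \in D_\theta} y_{t_i}^* \ge \frac{1}{2(\log k + 1)} = \frac{1}{2\log k + 2}$, which is exactly the claimed bound. I should be mildly careful about the indexing of the buckets — with $D_j$ defined for $j = 0, \dots, \log k$, the interval $(\frac{1}{2^{j+1}}, \frac{1}{2^j}]$ for $j = \log k$ is $(\frac{1}{2k}, \frac1k]$, so any pair with $y_{t_i}^* > \frac{1}{2k}$ is captured, confirming the "tail mass $\le \frac12$" estimate above. (If $k$ is not a power of two one takes $\lceil \log k \rceil$ throughout; this only changes constants and I would note it in passing or simply assume $k$ is a power of two WLOG.)

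The argument has essentially no obstacle — it is a two-line pigeonhole — so the "hard part" is really just the bookkeeping: making sure the half-open intervals tile $(0,1]$ correctly with no double-counting, and that the discarded low-value pairs genuinely sum to at most half the total mass rather than, say, nearly all of it. I expect the write-up to be four or five lines. One subtlety worth a sentence is why we may assume every $y_{t_i}^* \le 1$: this follows from the normalization constraint $\sum_{i \in [k]} y_{t_i}^* = 1$ together with $y_{t_i}^* \ge 0$, so no truncation at the top end of the range is needed and every positive-valued pair really does fall into one of the buckets $D_0, \dots, D_{\log k}$.
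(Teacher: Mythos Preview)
Your proposal is correct and follows essentially the same argument as the paper: bound the tail mass of pairs with $y_{t_i}^* \le 1/(2k)$ by $\tfrac12$, then pigeonhole over the $\log k + 1$ buckets. Your extra remarks about $y_{t_i}^* \le 1$ and the power-of-two assumption are fine but not strictly needed, since the paper implicitly absorbs these into the same bookkeeping.
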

\begin{proof}
  If $(s_i, t_i) \in D \setminus (\cup_{j =0}^{\log k} D_j)$, then
  $y_{t_i} \leq \frac 1 {2^{\log k+1}} = \frac 1 {2k}$. Therefore, the total
  $y$ value of pairs not covered by the sets $D_j$ is at most
  $\sum_{i \notin \cup_{j = 0}^{\log k} D_j} y_{t_i}
  \leq k \frac 1 {2k} = \frac 1 2$.
  Since $\sum_{i \in [k]} y_{t_i} = 1$, the total $y$ value of pairs
  covered by the sets $D_j$ is at least
  $\sum_{i \in \cup_{j = 0}^{\log k} D_j} y_{t_i} \geq \frac 1 2$.
  Since there are $\log k + 1$ disjoint groups, there is a group whose total
  $y$ value is at least $1/(2(\log k+1))$.
\end{proof}

Let $\theta$ be given by \Cref{claim:algo_good_group}. We use the
$\alpha$-approximation algorithm for DST twice: first, we consider the instance
on $G$ with terminal set $D_\theta^t = \{t_i: (s_i, t_i) \in D_\theta\}$ and obtain a directed
$r$-tree $T_t$. Second, we let $G'$ be obtained from $G$ by reversing the
direction of all edges. We apply the $\alpha$-approximation algorithm for
DST on $G'$ with terminal set $D_\theta^s = \{s_i: (s_i, t_i) \in D_\theta\}$ and obtain a
directed $r$-tree $T_s$ in $G'$. Note that $T_s$ is a directed in-tree
in $G$; therefore, $T = T_t \cup T_s$ is a valid junction tree on $G$ and
terminal pairs $D_{\theta}$.

\begin{claim}
\label{claim:algo_lp_feasible}
  $2^{\theta + 1} x^*$ is a feasible solution to \ref{DST-LP} on both of
  the following instances:
  \begin{itemize}
    \item $G$ with terminal set $D_\theta^t$,
    \item $G'$ with terminal set $D_\theta^s$.
  \end{itemize}
\end{claim}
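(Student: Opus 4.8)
The plan is a routine scaling argument driven by the fact that every pair surviving in $D_\theta$ carries a comparatively large fractional $y^*$-value. Recall that $\theta$ was chosen by \Cref{claim:algo_good_group} and that $D_\theta = \{(s_i,t_i) \in D : y^*_{t_i} \in (\tfrac{1}{2^{\theta+1}}, \tfrac{1}{2^{\theta}}]\}$. Hence for every $(s_i,t_i) \in D_\theta$ we have $y^*_{t_i} > \tfrac{1}{2^{\theta+1}}$, and by the equality constraint $y^*_{s_i} = y^*_{t_i}$ of \ref{Den-LP} also $y^*_{s_i} > \tfrac{1}{2^{\theta+1}}$. Consequently $2^{\theta+1}y^*_{t_i} > 1$ and $2^{\theta+1}y^*_{s_i} > 1$ for all $i \in D_\theta$. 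Nonnegativity of $2^{\theta+1}x^*$ is immediate from $x^* \geq 0$, so in both instances it remains only to verify the cut constraints of \ref{DST-LP}.

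For the instance $G$ with terminal set $D^t_\theta$, I would proceed as follows. The constraints of \ref{DST-LP} here are indexed by the sets $S \subseteq V$ with $r \in S$ for which some terminal of $D^t_\theta$ lies outside $S$; fix such an $S$ and pick an index $i \in D_\theta$ with $t_i \notin S$. The first constraint family of \ref{Den-LP}, applied to this $i$ and this $S$, gives $\sum_{e \in \delta^+(S)} x^*_e \geq y^*_{t_i}$. Multiplying through by $2^{\theta+1}$ and using $2^{\theta+1}y^*_{t_i} > 1$ yields $\sum_{e \in \delta^+(S)} 2^{\theta+1}x^*_e > 1 \geq 1$, so the corresponding \ref{DST-LP} constraint is satisfied. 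Since $S$ was an arbitrary valid cut, $2^{\theta+1}x^*$ is feasible for \ref{DST-LP} on this instance.

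For the instance $G'$ (the edge-reversal of $G$) with terminal set $D^s_\theta$, the only extra care needed is the direction bookkeeping: we keep the same $x^*$-value on each reversed edge, and for any $S$ an edge lies in $\delta^+_{G'}(S)$ exactly when its reversal lies in $\delta^-_{G}(S)$, so $\sum_{e \in \delta^+_{G'}(S)} x^*_e = \sum_{e \in \delta^-_{G}(S)} x^*_e$. Now fix a valid cut $S$ for this instance, i.e.\ $r \in S$ and some $s_i \in D^s_\theta$ with $s_i \notin S$. The second constraint family of \ref{Den-LP} gives $\sum_{e \in \delta^-_{G}(S)} x^*_e \geq y^*_{s_i}$, hence $\sum_{e \in \delta^+_{G'}(S)} 2^{\theta+1}x^*_e \geq 2^{\theta+1}y^*_{s_i} > 1$, as required. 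The argument presents no genuine obstacle; the one thing to be careful about is aligning the quantifier structure of \ref{DST-LP} (one constraint per cut separating $r$ from at least one terminal of the relevant terminal set) with that of \ref{Den-LP} (one constraint per pair $(i,S)$), and keeping the edge-reversal correspondence straight in the second case.
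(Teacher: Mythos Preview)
Your proof is correct and follows essentially the same approach as the paper: fix a cut $S$ separating $r$ from a terminal in $D_\theta$, invoke the corresponding \ref{Den-LP} constraint, and scale by $2^{\theta+1}$ using the lower bound $y^*_{t_i} > 1/2^{\theta+1}$ (together with $y^*_{s_i} = y^*_{t_i}$ and the $\delta^+_{G'}(S) \leftrightarrow \delta^-_G(S)$ correspondence for the second instance). The paper's proof is nearly identical, just slightly more terse.
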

\begin{proof}
  We first consider $G$ with terminal set $D_\theta^t$. Fix $S \subseteq V$,
  $t_i \in D_{\theta}^t$ such that $r \in S, t_i \notin S$. Since $x^*$ is a
  feasible solution to \ref{Den-LP} and $i \in D_{\theta}$,
  \[\sum_{e\in \delta_G^+(S)} 2^{\theta+1} x^*_e
  \geq 2^{\theta + 1} y_{t_i} > 2^{\theta+1}\frac 1{2^{\theta+1}} = 1.\]
  Next, we consider $G'$ with terminal set $D_{\theta}^s$, and fix $S \subseteq V$
  and $s_i \in D_{\theta}^s$ such that $r \in S, s_i \notin S$. Then, using
  the fact that $y_{s_i} = y_{t_i}$ for all $i$, we use the same argument
  as above:
  \[\sum_{e\in \delta_{G'}^+(S)} 2^{\theta+1} x^*_e
  = \sum_{e \in \delta_G^-(S)} 2^{\theta + 1} x^*_e
  \geq 2^{\theta + 1} y_{s_i}
  = 2^{\theta + 1} y_{t_i}
  > 2^{\theta+1}\frac 1{2^{\theta+1}} = 1.\]
  In both cases, all corresponding constraints in \ref{DST-LP} are satisfied.
\end{proof}

\begin{claim}
\label{claim:algo_density}
  The density of $T$ is at most $O(\alpha \log k) \sum_{e \in E} c(e) x_e^*$.
\end{claim}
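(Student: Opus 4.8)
The plan is to bound the cost of $T = T_t \cup T_s$ by bounding the cost of each tree separately, and then divide by the number of terminal pairs covered, which is $|D_\theta|$. For the cost bound, I would invoke the hypothesis that we have an $\alpha$-approximation for DST with respect to \ref{DST-LP}. By Claim \ref{claim:algo_lp_feasible}, the vector $2^{\theta+1}x^*$ is feasible for \ref{DST-LP} on the instance $(G, D_\theta^t)$, so the optimal LP value for that instance is at most $2^{\theta+1}\sum_{e \in E} c(e)x_e^*$. Hence $c(T_t) \leq \alpha \cdot 2^{\theta+1}\sum_{e \in E} c(e) x_e^*$. The identical argument applied to the reversed graph $G'$ with terminal set $D_\theta^s$ gives $c(T_s) \leq \alpha \cdot 2^{\theta+1}\sum_{e \in E} c(e) x_e^*$ (note $c$ is unchanged under reversing edge directions). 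Adding these, $c(T) \leq 2\alpha \cdot 2^{\theta+1}\sum_{e \in E} c(e)x_e^* = \alpha \cdot 2^{\theta+2}\sum_{e \in E}c(e)x_e^*$.

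Next I would lower-bound $|D_\theta|$, the number of terminal pairs covered by $T$. Every pair in $D_\theta$ has $y_{t_i} \le 1/2^\theta$, so from Claim \ref{claim:algo_good_group} we get $\frac{1}{2\log k + 2} \le \sum_{i \in D_\theta} y_{t_i} \le |D_\theta|/2^\theta$, which rearranges to $|D_\theta| \ge 2^\theta/(2\log k + 2)$. Therefore the density of $T$ is
\[
\frac{c(T)}{|D_\theta|} \;\le\; \frac{\alpha\cdot 2^{\theta+2}\sum_{e\in E}c(e)x_e^*}{2^\theta/(2\log k+2)} \;=\; 4\alpha(2\log k+2)\sum_{e\in E}c(e)x_e^* \;=\; O(\alpha\log k)\sum_{e\in E}c(e)x_e^*,
\]
as claimed. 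The $2^\theta$ factors cancel, which is the whole point of the bucketing: within a single bucket the $y$-values are within a factor of $2$ of one another, so scaling $x^*$ up by $2^{\theta+1}$ makes it DST-feasible while the number of covered terminals is comparable to $2^\theta$.

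I do not anticipate a serious obstacle here; the claim is essentially a bookkeeping consequence of the two preceding claims. The one point requiring a little care is confirming that $T = T_t \cup T_s$ genuinely covers all of $D_\theta$ — i.e., that it is a valid junction tree on $D_\theta$ — which was already argued in the paragraph preceding Claim \ref{claim:algo_lp_feasible}: $T_t$ is an out-tree from $r$ reaching every $t_i \in D_\theta^t$, and $T_s$ (a reversed $r$-out-tree in $G'$) is an in-tree to $r$ from every $s_i \in D_\theta^s$, so every pair $(s_i,t_i) \in D_\theta$ has an $s_i$-$r$ path and an $r$-$t_i$ path in $T$. The only other thing to double-check is that edge costs are preserved under the reversal $G \to G'$, so that the LP objective and the cost of $T_s$ are measured consistently; this is immediate since reversal does not change $c(e)$.
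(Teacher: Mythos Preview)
Your proof is correct and follows essentially the same route as the paper: bound $c(T_t)$ and $c(T_s)$ via Claim~\ref{claim:algo_lp_feasible} and the $\alpha$-approximation guarantee, lower-bound $|D_\theta|$ using Claim~\ref{claim:algo_good_group} together with $y_{t_i}\le 1/2^\theta$, and divide. The constants match (your $4\alpha(2\log k+2)$ equals the paper's $8\alpha(\log k+1)$), and your additional remarks on why $T$ covers $D_\theta$ and why costs are preserved under reversal are accurate but not needed for the claim itself.
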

\begin{proof}
  By \Cref{claim:algo_lp_feasible}, along with the fact that the
  DST algorithm used to construct $T_t$ and $T_s$ is an $\alpha$-approximation
  with respect to the optimal fractional solution, the costs of
  $T_t$ and $T_s$ are each upper bounded by
  $\alpha 2^{\theta + 1} \sum_{e \in E} c(e) x_e^*$.

  To bound the number of terminals covered by $T$, i.e. $|D_{\theta}|$,
  note that $\sum_{i \in D_\theta} y_{t_i} \leq \sum_{i \in D_{\theta}} 1/2^\theta
  = |D_\theta| 1/2^{\theta}$. Thus by \Cref{claim:algo_good_group},
  $|D_\theta| \geq 2^{\theta} \sum_{i \in D_{\theta}} y_{t_i}
  \geq 2^{\theta}/(2\log k + 2)$. Therefore, the density of $T$ is at most
  \begin{align*}
    &\frac{c(T_t) + c(T_s)}{|D_{\theta}|}
  \leq \frac {2\log k + 2} {2^{\theta}} (\alpha 2^{\theta + 2} \sum_{e \in E} c(e)x_e^*)
  = 8\alpha(\log k + 1) \sum_{e \in E} c(e) x_e^*.
  \end{align*}
\end{proof}

\Cref{lem:algo-main} follows from \Cref{claim:algo_density},
\Cref{claim:den-lp-valid}, and the
fact that $T$ is a valid junction tree.

\section{Proof of \Cref{thm:main}}
\label{sec:main_thm_proof}

\Cref{thm:existence-main} and \Cref{thm:algo-main} suffice to
conclude the proof of \Cref{thm:main} via a greedy
covering approach that is standard for covering problems such as Set Cover.

\begin{proof}[Proof of Theorem \ref{thm:main}]
  Let $(G, D)$ be an instance of planar-DSF. Combining
  \Cref{thm:existence-main} and \Cref{thm:algo-main}, we can find
  a junction tree of density at most $O(\log^5 k) \opt(G)/k$.
  Let $H_1$ be such a junction tree on $(G, D)$, and let $D_1$ be the set
  of terminal pairs covered by $H_1$. We remove $D_1$ from $D$ and repeat
  until all terminal pairs are covered.
  Since each junction tree covers at least one terminal pair, this process
  terminates in at most $k$ iterations. Let $H_1, \dots, H_{\ell}$ be the
  junction trees formed by this process, and for $j \in [\ell]$, let
  $D_j$ be the set of terminals covered by $H_j$. We denote by
  $D_{<j}$ the set $\cup_{j' < j} D_{j'}$.
  We then return $H = \cup_{j \in [\ell]} H_j$.

  It is clear by construction that $H$ is a feasible solution: for each
  $i \in [k]$, there exists some junction tree $H_j$ that covers $(s_i, t_i)$, and
  the path in $H_j$ from $s_i$ to its root concatenated with the path from
  the root to $t_i$ is an $s_i$-$t_i$ path in $H$. To bound the
  cost of $H$, note that
  $c(H) \leq \sum_{j \in [\ell]} c(H_j) = \sum_{j \in [\ell]} |D_j| \cdot \den(H_j)$.
  By construction, each $H_j$ has density at most $O(\log^5 k_j) \opt(G)/k_j$,
  where $k_j = |D \setminus D_{<j}|$ is the number of terminals remaining 
  when constructing $H_j$. Therefore,
  \begin{align*}
    c(H) \leq \sum_{j \in [\ell]} |D_j| \cdot O(\log^5 k_j) \frac{\opt(G)}{|D \setminus D_{<j}|}
    \leq O(\log^5 k) \opt(G) \sum_{j \in [\ell]}
    \frac{|D\setminus D_{<j}| - |D \setminus D_{< j+1}|}{|D \setminus D_{< j}|}.
  \end{align*}
  The term $\sum_{j \in [\ell]}
    \frac{|D\setminus D_{<j}| - |D \setminus D_{< j+1}|}{|D \setminus
      D_{< j}|}$
    is bounded by the $k$'th harmonic number $H_k$, thus $c(H)$ is at most
  $O(\log^6 k) \opt(G)$.
\end{proof}

\section{Conclusion}
\label{sec:conclusion}
Several open questions arise from our work in this paper.  It is
unlikely that the $O(\log^6 k)$ approximation ratio that we obtained
is tight. There are no known lower bounds that rule out a
constant-factor approximation for DSF in planar graphs. Closing this
gap is a compelling question.  Second, can we
establish a poly-logarithmic ratio upper bound on the integrality gap
of the natural cut-based LP relaxation for planar-DSF?  Our techniques
do not directly generalize to fractional solutions (see
\Cref{rem:dsf_integrality_gap}); however, we are hopeful that other
approaches may yield positive results.  Another direction for future
research is to extend this work and also the recent work on DST and related
problems from planar graphs to any proper minor-closed family of
graphs. Finally, there are several generalizations of DST and DSF that may
also admit positive results in planar graphs.

\paragraph{Acknowledgements:} We thank the anonymous reviewers for 
helpful suggestions and pointers to related work.

\bibliographystyle{plainurl}
\bibliography{planar_dsf,dst_extensions_paper}

\end{document}